\documentclass[12 pt]{extarticle}
\usepackage{amsmath}
\usepackage{amssymb}
\usepackage[a4paper, total={6in, 9in}]{geometry}
\usepackage{amsthm,url,tikz}
\usepackage[utf8]{inputenc}
\usepackage[english]{babel}
\usepackage{graphicx}
\usepackage{multirow} 
\usepackage{makecell}
\usepackage{hyperref}
\usepackage{adjustbox}
\numberwithin{equation}{section}
\usepackage{makecell}
\usepackage{bm}
\usepackage{xcolor}
\newtheorem{theorem}{Theorem}[section]
\newtheorem{proposition}[theorem]{Proposition}
\newtheorem{corollary}[theorem]{Corollary}
\newtheorem{lemma}[theorem]{Lemma}
\newtheorem{definition}[theorem]{Definition}

\newtheorem{example}[theorem]{Example}
\newtheorem{remark}[theorem]{Remark}
\usepackage{enumerate}
\usepackage{array}
\usepackage[nottoc,notlot,notlof]{tocbibind} 

\title{ Function-Correcting $b$-symbol Codes for Locally ($\lambda,\rho, b$)-Functions }
\author{Gyanendra K. Verma,
Anamika Singh, and Abhay Kumar Singh
 \footnote{ 
 Gyanendra K. Verma is financially supported by UAEU-AUA grant G00004614. Gyanendra K. Verma is with the Department of Mathematical Sciences,
UAE University, PO Box 15551, Al Ain, UAE email: gkvermaiitdmaths@gmail.com. Anamika Singh and Abhay Kuamr Singh are with the Department of Mathematics and Computing, Indian Institute of Technology (ISM) Dhanbad, Dhanbad,
 India email: anamikabhu2103@gmail.com, email: abhay@iitism.ac.in.
 Corresponding author: Abhay Kumar Singh.}}

\date{}

\begin{document}
	\maketitle
\begin{abstract} 

The family of functions plays a central role in the design and effectiveness of function-correcting codes. By focusing on a well-defined family of functions, function-correcting codes can be constructed with minimal length while still ensuring full error detection and correction within that family. In this work, we explore the concept of locally $(\lambda,\rho)$-functions for $b$-symbol read channels and investigate the optimal redundancy of the corresponding function-correcting $b$-symbol codes (FCBSC) by introducing the notions of locally  $(\lambda,\rho,b)$-functions. First, we discuss the values of $\lambda$ and $\rho$ for which a function can be considered as a locally $(\lambda,\rho)$-function in $b$-symbol metric. The findings improve some known results in the Hamming metric and present several new results in the $b$-symbol metric. Then we investigate the optimal redundancy of $(f,t)$-FCBSCs for locally $(\lambda,\rho,b)$-functions. We establish a recurrence relation between the optimal redundancy of $(f,t)$-function-correcting codes for the $(b+1)$-symbol read and $b$-symbol read channels. We present an upper bound on the optimal redundancy of $(f,t)$-function-correcting $b$-symbol codes for general locally ($\lambda,\rho$, $b$)-functions by associating it to the minimum achievable length of $b$-symbol error-correcting codes and traditional Hamming-metric codes, given a fixed number of codewords and a specified minimum distance.  We derive some explicit upper bounds on the redundancy of $(f,t)$-function-correcting $b$-symbol codes for locally $(\lambda,2t,b)$-functions. Moreover, for the case where $b=1$, we show that a locally ($3,2t,1$)-function achieves the optimal redundancy of $3t$. Additionally, we explicitly investigate the locality and optimal redundancy of FCBSCs for the 
 $b$-symbol weight function and weight distribution function for $b\geq1$.

\end{abstract}
\textbf{Keywords}: Function-correcting codes, error-correcting codes, $b$-symbol read channels, redundancy.\\ 
	\textbf{Mathematics subject classification}: 94B60, 94B65.\\

\section{Introduction}
In classical communication channels, messages are transmitted through error-prone channels. Error-correcting codes are designed to recover the original message despite these errors, ensuring that the entire message can be accurately reconstructed. In many practical scenarios, the receiver is more interested in a specific function of the message rather than the entire message itself. For example, in sensor networks, multiple devices collect data like temperature or humidity, but the receiver typically only needs to compute summaries such as the average or maximum value. Similarly, in distributed computing systems like MapReduce, large datasets are processed across different servers, and only aggregate results like counts or totals are required. In blockchain and cryptographic protocols, verifying functions like hashes or digital signatures is essential, while access to the full data is unnecessary. In video or audio streaming, the receiver may only need information about stream quality or keyframes, rather than the entire media content. Network monitoring systems often focus on metrics like error rates or traffic patterns instead of inspecting all transmitted packets. Likewise, in remote control systems and IoT applications, devices commonly require simple commands or alerts rather than complete configuration files. These examples highlight how function-based communication is more efficient and relevant in many real-world applications.

In function-correcting codes (FCCs), both the communication channel and the chosen function play a central role. They determine the specific aspect or computation of the message that must be preserved during transmission, as well as how errors affect this process and how redundancy should be added to ensure accurate recovery. Unlike classical error-correcting codes, which aim to recover the entire message despite the presence of errors, FCCs are designed to ensure the accuracy of a particular function applied to the message, even if parts of the message are corrupted. This change in focus places the function at the core of both the encoding and decoding processes. Moreover, FCCs open new directions in coding theory by enabling the design of codes tailored to the structure and significance of specific functions, rather than treating all message components uniformly. As a result, FCCs offer a powerful and adaptable approach with strong potential for both theoretical advancement and practical application.

Function-correcting codes, as introduced in the breakthrough work by Lenz et al. \cite{Lenz2023}, offer a significant advantage in minimizing redundancy. Therefore, it is important to explore their performance across various types of communication channels using appropriate functions. The $b$-symbol read channel is a key candidate for the design of function-correcting codes, owing to its critical role in high-density data storage systems. $b$-symbol read channels have been developed, aiming to provide robust protection against errors occurring within symbol-pairs and $b$-symbols. The concept of symbol-pair read channels was first introduced by Cassuto and Blaum in \cite{symbol_pair}, and later extended to $b$-symbol read channels by Yaakobi et al. in \cite{b_symbol}. In a series of subsequent works \cite{Chen2022,Chen2025,Ding2018,Elishco2020,Luo2024,Song2018,Verma2025}, various bounds and coding-theoretic properties have been explored for these channels, deepening the theoretical foundation and guiding the design of efficient error-correcting codes tailored to these reading models.

\subsection{Motivation}
Locally bounded functions are fundamental to the design and operation of function-correcting codes across various communication channels. Their importance motivates the development of function-correcting codes tailored to b-symbol read channels, which are critical in high-density data storage systems. A $b$-symbol read channel is one in which errors may affect multiple adjacent symbols at once, rather than just individual symbols. This models realistic scenarios where reading or writing errors occur over clusters of data points (for example, due to interference or defects in storage media). Singh et al. \cite{Singh2025} addressed this challenge by designing function-correcting codes specifically for such channels, meaning that their codes are structured not only to detect and correct random errors but also to efficiently correct errors that affect blocks of data simultaneously.

In this paper, we introduce the concept of locally $(\lambda,\rho,b)$-functions, where the parameter $\lambda$ defines the locality or the number of symbols accessed for correction, $\rho$ represents the allowable error tolerance within the local region, and $b$  characterizes the size of symbol blocks affected by correlated errors. This extended framework enables efficient and resilient function correction in environments where errors are clustered or non-uniformly distributed, such as in high-density data storage and communication systems. By integrating block-based error patterns into the locality model, locally $(\lambda,\rho,b)$-functions provide a flexible and practical approach to designing function-correcting codes that are both redundancy-efficient and capable of addressing practical error environments.

\subsection{Related Works}
Lenz et al. introduced the concept of function-correcting codes (FCCs) in \cite{Lenz2023}, where it was shown that FCCs are equivalent to irregular-distance codes. This equivalence highlights that the ability of an FCC to protect a specific function of the message corresponds to designing codes with varying error-correction capabilities across different parts of the message. Xia et al. \cite{Xia2023} extended the concept of function-correcting codes from binary symmetric channels to symbol-pair read channels, introducing a new coding paradigm adapted to settings where data are accessed in overlapping symbol pairs rather than individual symbols. In \cite{Singh2025}, Singh et al. advanced the theory by developing function-correcting codes for $b$-symbol  read channels over finite fields and provided the notion of irregular $b$-distance codes. 
Optimal redundancy plays a crucial role in the design and analysis of function-correcting codes. In \cite{Lenz2023,Premlal2024,Singh2025,Xia2023}, the authors provided either exact values for optimal redundancy or established upper and lower bounds for it, offering key insights into the efficiency limits of FCCs. The optimal redundancy of function-correcting codes for the Hamming weight function and the Hamming weight distribution function is discussed in \cite{Ge2025}, where the authors improved the known bounds established in \cite{Lenz2023}. In \cite{Ly2025}, Hoang Ly and Emina Soljanin demonstrated that the bound on optimal redundancy is tight for sufficiently large finite fields, establishing it as an upper bound in such cases. Building on this result, they proposed an encoding scheme that achieves the optimal redundancy. Motivated by these two limiting cases over the binary field, they further conjectured that the bound remains a valid upper bound across all finite fields. In Table \ref{tabredundancy}, we briefly list known bounds on the optimal redundancy of function-correcting codes for locally bounded functions.

    \vspace{2mm}
    \begin{table}[h!]
    \centering
    \small
    \renewcommand{\arraystretch}{1.5}
    \resizebox{\textwidth}{!}{
    \begin{tabular}{|c|c|c|c|c|}
    \Xhline{2\arrayrulewidth}
    \makecell{\textbf{Metric}} & \textbf{Locally $(\lambda, \rho)$-function} & \textbf{Optimal Redundancy} & \textbf{References} & \textbf{Remark}\\
    \Xhline{2\arrayrulewidth}
    \multirow{4}{*}{Hamming metric over $\mathbb{F}_2$} & Locally $(2, 2t)$-function & $r_f^H(k,t) = 2t$ & \cite[Lemma 5]{Lenz2023} & -\\
    \cline{2-5}
                         & Locally $(3, 2t)$-function & $r_f^H(k,t) = 3t$ & Theorem \ref{optimality_condition} & Our contribution\\
    \cline{2-5}
                         & Locally $(4, 2t)$-function & $r_f^H(k,t) = 3t$ & \cite[Theorem 6]{Rajput2025} & -\\ 
    \cline{2-5}
                         & Locally $(\lambda, 2t)$-function  & $r_f^H(k,t) \leq \lambda t$ & Theorem \ref{le_bound_lambda} & Our contribution\\                        
    \Xhline{3\arrayrulewidth}
    Homogeneous metric over $\mathbb{Z}_{2^s}$                & Locally $(2, 2t)_h$-function & $t\leq r_f^h(k,t) \leq 2t$ & \cite[Lemma 5]{liu2025function} & -\\
    \Xhline{3\arrayrulewidth}
   \multirow{2}{*}{Lee metric over $\mathbb{Z}_m$, $m\geq 2$}  & Locally $(2, 2t)_L$-function & $r_f^L(k,t) = \left\lceil \frac{2t}{\left\lfloor \frac{m}{2} \right\rfloor} \right\rceil $ & \cite[Proposition 5.6]{Verma2025FCCL} & - \\ 
    \cline{2-5}
                        & Locally $(\lambda, 2t)_L$-function & $r^L_f(k,t) \leq \left\lceil \frac{t}{\left\lfloor \frac{m}{2\lambda} \right\rfloor} \right\rceil $, $\lambda\leq \frac{m}{2}$ &\cite[Lemma 16]{rajan2025explicit} & -\\
    \cline{2-5}
                        & Locally $(3, 2t)_L$-function & $r_f^L(k,t) = t,$ $m=6$ &\cite[Lemma 17]{rajan2025explicit} & -\\
    \Xhline{3\arrayrulewidth}
    \multirow{4}{*}{$b$-Symbol metric}  & Locally $(2, 2t)$-function, $b=2$ & $2t - 2\leq r_f^b(k,t) \leq 2t - 1$ & \cite[Lemma 14]{Xia2023} & -\\
    \cline{2-5}
                & Locally $(2, 2t)$-function, $b\geq 1$ & $ 2(t- b + 1) \leq r_f^b(k,t) \leq 2t - b + 1 $ & \cite[Lemma 5.1]{Singh2025} & -\\         
    \cline{2-5}   
                & Locally $(4, 2t)$-function, $b\geq 1$ & $ r_f^b(k, t) \leq 3t-b+1$ & Lemma \ref{lambda4} & Our contribution\\  

    \cline{2-5} 
                & Locally $(2^b, 2t)$-function, $b\geq 1$ & $2(t - b + 1) \leq r_f^b(k, t) \leq 2t$ & Corollary \ref{opt_red_bdd} & Our contribution\\  
    \Xhline{3\arrayrulewidth}  
    
    \end{tabular}}
    \caption{Bounds on optimal redundancy of function-correcting codes under various metrics for locally bounded functions.}\label{tabredundancy}
    \end{table}
\subsection{Our Contributions}
The concept of locally ($\lambda,\rho$)-function-correcting codes is explored in \cite{Rajput2025}, where the authors propose an upper bound on the redundancy of such codes. This bound is derived on the basis of the minimum possible length of an error-correcting code for a given number of codewords and a specified minimum distance. Building on this idea, we generalize the concept of locally ($\lambda,\rho$)-function-correcting codes accommodating $b$-symbol read channels, and refer to the resulting framework succinctly as locally $(\lambda, \rho, b)$-functions. 

In this work, we introduce the concept of locally bounded functions in $b$-symbol read channel, called locally $(\lambda,\rho,b)$-function (see Definition \ref{deflocally}). The definition is a natural generalization of the corresponding one in the Hamming metric. We investigate various recurrence relations associated with locally $(\lambda, \rho, b)$-functions, with particular focus on how the parameters $\lambda$, $\rho$,  and $b$ interact and also determine the smallest value of $\lambda$ for which a given function $f$ qualifies as a locally ($\lambda,\rho, b$)-function. In particular, we systematically study the locality of the weight functions and weight distribution functions in the Hamming metric (see Theorems \ref{lambda_w}, \ref{lamhwd}) and $b$-symbol metric by providing the upper and lower bounds on the smallest value of $\lambda$ (see Corollaries \ref{luboundw}, \ref{luboundwd}). In addition, we present several examples that achieve the upper bound as well as the intermediate values for various choices of parameters $k,\lambda,$ and $\rho$ in Table \ref{tab1}.

In the later section, we derive several bounds on the optimal redundancy of locally $(\lambda, \rho, b)$-function-correcting codes by relating it to the minimum achievable length of $b$-symbol error-correcting codes, as well as classical error-correcting codes under the Hamming metric, for a fixed number of codewords and minimum distance in Theorems \ref{irreb} and \ref{genlambda}. In particular, we also present explicit upper bounds by constructing $(f,t)$-FCBSCs for the class of locally $(4, 2t, b)$-functions and locally $(2^b, 2t, b)$-functions in Lemma \ref{lambda4} and Theorem \ref{2^b_2t}, respectively. Furthermore, for the case $b = 1$, we show that a locally $(3, 2t, b)$-function achieves the optimal redundancy of $3t$ under certain conditions in Theorem \ref{optimality_condition}.  We also provide an explicit construction of an $(f,t)$-FCBSC with redundancy $2t-b+1$, where $f$  is the $b$-symbol weight distribution function in Theorem \ref{explicitcons}.

\subsection{Organization}
The paper is organized as follows. In Section \ref{pre}, we recall the fundamentals and key results related to $b$-symbol codes and function-correcting codes. In Section \ref{sectionlocally}, we introduce the concept of locally bounded functions for $b$-symbol read channels and investigate the values of parameters for which a given function behaves like a locally bounded function. In Section \ref{sectionredundancy}, we establish several bounds on the optimal redundancy of an $(f,t)$-FCBSCs for the class of locally bounded functions. Finally, in Section \ref{sectionconclusion}, we conclude the paper.

\subsubsection*{Notations}
    Throughout this paper, we use the following notations. Let $\mathbb{F}_2$ denote the binary field and $\mathbb{F}_2^n$ be the set of all binary vectors of length $n$. We also employ the standard conventions where $\mathbb{N}_{0}$ represents the set of all non-negative integers, $[M]$ denotes the set of positive integers $\{1, 2, \ldots, M\}$, and $a^+$ denotes $\max\{a,0\}$. For a positive real number $a$, $\{a\}$ denotes the fractional part of $a$, that is, $\{a\}=a-[a]$, where $[a]$ is integer part of $a$. The code parameters are the message length $k$, the redundancy $r$, and the total code length $n = k + r$; the parameter $ b \ge 1$ denotes the size of the read window in the $b$-symbol channel. For a vector $x = (x_0, x_1, \dots, x_{n-1}) \in \mathbb{F}_2^n$, its $b$-symbol read vector is denoted by $\pi_b(\mathbf{x})$. The Hamming and $b$-symbol distances between two vectors $x$ and $y$ are written as $d_H(x,y)$ and $d_b(x,y)$, respectively, with their corresponding weights denoted by $w_H(x)$ and $w_b(x)$. A ball of radius $\rho$ centered at $u$ is represented as $B_H(u,\rho)$ for the Hamming metric and $B_b(u,\rho)$ for the $b$-symbol metric. For a function $f : \mathbb{F}_2^k \to \text{Im}(f)$, the function ball of radius $\rho$ in the $b$-symbol metric is denoted by $B^b_f(u,\rho)$, and its locality is characterized by the parameters $(\lambda, \rho, b)$ for some positive integer $\lambda$, we denote the smallest $\lambda$ for which $f$ is a locally $(\lambda, \rho, b)$-function by $\lambda_s$. An $(f,t)$-function-correcting $b$-symbol code is abbreviated as an $(f,t)$-FCBSC, and its optimal redundancy is $r^b_f(k,t)$. Finally, $N_b(\lambda,\rho)$ and $N_H(\lambda,\rho)$ denote the minimum length of a code with $\lambda$ codewords and minimum distance $\rho$ in the $b$-symbol and Hamming metrics, respectively, while $\Delta^b_T$ and $\Delta^H_T$ represent their respective weight distribution functions with a threshold $T$.

\section{Preliminaries}\label{pre}
This section presents fundamental concepts related to classical $b$-symbol codes and function-correcting $b$-symbol codes. These definitions and properties form the foundational tools necessary for understanding the main results developed in the subsequent sections of this paper. For  $b=1$, function-correcting $b$-symbol codes reduce to the classical function-correcting codes (FCCs). We use the term FCBSC in place of FCC to maintain generality. To align with existing notation for function-correcting codes in the Hamming metric, we denote redundancy by \( r_f^H(k, t) \).

\subsection{Codes in the $b$-symbol metric}
In this subsection, we briefly recall the fundamentals of $b$-symbol codes. We begin by defining the $b$-symbol distance and then present several relations between the Hamming distance and the $b$-symbol distance. For this, we refer \cite{Chen2025,Ding2018,Luo2024,b_symbol,Singh2025}. 
\begin{definition}[\cite{b_symbol} $b$-symbol read vector]
Let $b\geq 1$ be an integer and $x = (x_0, x_1, \ldots, x_{n-1})$ be a vector in $\mathbb{F}_2^n$. The \emph{$b$-symbol read vector} of $x$, denoted by $\pi_b(x)$, is defined as
\begin{equation*}
    \pi_b(x) = \left( (x_0, x_1, \ldots, x_{b-1}), (x_1, x_2, \ldots, x_b), \ldots, (x_{n-1}, x_0, \ldots, x_{b-2}) \right),
\end{equation*}
where each entry is a consecutive sequence of $b$ symbols taken cyclically from $x_0$.
\end{definition}

\begin{definition}[\cite{b_symbol} $b$-Symbol distance]
Given two vectors $x, y \in \mathbb{F}_2^n$, the \emph{$b$-symbol distance} between them is defined by
\begin{equation*}
    d_b(x, y) = d_H\left(\pi_b(x), \pi_b(y)\right),
\end{equation*}
where $d_H$ denotes the Hamming distance and $(x_{i},x_{i+1},\dots,x_{i+b-1})\neq (y_{i},y_{i+1},\dots,y_{i+b-1})$ if $x_{i+j}\neq y_{i+j}$ for some $0\leq j\leq b-1$.
\end{definition}
\begin{definition}[\cite{b_symbol} $b$-Symbol read code]
Let $\mathcal{C}$ be a code over $\mathbb{F}_2$. The corresponding \emph{$b$-symbol read code}, denoted by $\pi_b(\mathcal{C})$, is the image of $\mathcal{C}$ under the $b$-symbol read mapping:
\begin{equation*}
    \pi_b(\mathcal{C}) = \left\{ \pi_b(c) : c \in \mathcal{C} \right\}.
\end{equation*}
The minimum $b$-symbol distance of $\mathcal{C}$ is defined as
\begin{equation*}
d_b(\mathcal{C}) = d_H\left(\pi_b(\mathcal{C})\right),  
\end{equation*}
that is, the minimum Hamming distance between any two distinct elements of $\pi_b(\mathcal{C})$. 
\end{definition}
Let $u\in\mathbb{F}_2^n$. Define a ball $B_b(u,\rho)$  with center  $u$ and radius $\rho$ as follows
$$B_b(u,\rho)=\{y\in\mathbb{F}_2^n:\ d_b(u,y)\leq \rho\}.$$
For $b=1$, the $b$-symbol distance coincides with the Hamming distance. We denote a ball in the Hamming metric by $B_H(u,\rho)$ with center  $u$ and radius $\rho$.
We now recall some results that will be used later in the paper.
\begin{lemma}\cite[Lemma 2.1]{Singh2025}\label{handb}
 Let $x,y\in \mathbb{F}_2^n$. Then
 \begin{enumerate}
     \item If $d_H(x,y)>n-(b-1)$, then $d_b(x,y)=n$.
     \item If $0<d_H(x,y)\leq n-(b-1)$, then $d_H(x,y)+(b-1)\leq d_b(x,y)\leq b\cdot d_H(x,y)$.
     \item If $d_H(x,y)=0$, then $d_b(x,y)=0$.
 \end{enumerate}
\end{lemma}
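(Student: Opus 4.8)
The plan is to translate the $b$-symbol distance into a covering count on the cyclic index set $\mathbb{Z}_n$ and then argue combinatorially. Write $d = d_H(x,y)$ and let $S = \{\, i \in \mathbb{Z}_n : x_i \neq y_i \,\}$, so $|S| = d$. For each $i$ let $W_i = \{\, i, i+1, \dots, i+b-1 \,\}$ (indices mod $n$) be the window read at position $i$. Since the $i$-th coordinate of $\pi_b(x)$ equals that of $\pi_b(y)$ exactly when $x$ and $y$ agree on all of $W_i$, we obtain the key identity
$$ d_b(x,y) = \#\{\, i \in \mathbb{Z}_n : W_i \cap S \neq \emptyset \,\} = n - \#\{\, i \in \mathbb{Z}_n : W_i \subseteq \overline{S}\,\}, $$
where $\overline{S} = \mathbb{Z}_n \setminus S$. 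Claim (3) is immediate: $d = 0$ forces $S = \emptyset$, so no window meets $S$ and $d_b(x,y) = 0$.

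For the upper bound in (2) I would dualize: a fixed $s \in S$ lies in exactly $b$ windows, namely $W_{s-b+1},\dots,W_s$. Hence $\{\, i : W_i \cap S \neq \emptyset \,\}$ is the union over $s \in S$ of these $b$-element sets, so by the union bound its size is at most $b\,|S| = b\,d = b\,d_H(x,y)$.

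The remaining two claims both reduce to estimating $\#\{\, i : W_i \subseteq \overline{S}\,\}$. Decompose $\overline{S}$ into its maximal cyclic runs of consecutive elements, of lengths $\ell_1,\dots,\ell_m$ (with $m=0$ when $\overline{S}=\emptyset$; since $d\ge 1$ in cases (1) and (2) we have $\overline{S}\neq\mathbb{Z}_n$, so each run is a proper arc). A window lies in $\overline{S}$ precisely when it fits inside one of these arcs, and an arc of length $\ell$ contains $\max(\ell-b+1,0)$ windows, so $\#\{\, i : W_i \subseteq \overline{S}\,\} = \sum_j \max(\ell_j-b+1,0)$. For (1), the hypothesis $d > n-(b-1)$ gives $|\overline{S}| = n-d \le b-2 < b$, so every $\ell_j < b$, no window fits, and $d_b(x,y) = n$. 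For the lower bound in (2), write $\sum_j \max(\ell_j-b+1,0) = \sum_{j \in J}(\ell_j-(b-1))$ where $J = \{\, j : \ell_j \ge b\,\}$; if $J = \emptyset$ this is $0$, so $d_b(x,y) = n$, which is $\ge d + (b-1)$ exactly because $d \le n-(b-1)$, while if $J \neq \emptyset$ then $\sum_{j\in J}(\ell_j-(b-1)) \le \bigl(\sum_{j\in J}\ell_j\bigr) - (b-1) \le |\overline{S}| - (b-1)$, whence $d_b(x,y) \ge n - |\overline{S}| + (b-1) = d + (b-1)$.

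The only genuinely delicate point is the lower bound in (2): one must treat the fully covered situation (no run of length $\ge b$, in particular $\overline{S}=\emptyset$) separately, and it is exactly there that the hypothesis $d_H(x,y) \le n-(b-1)$ enters — without it the inequality $d_H+(b-1)\le d_b$ can exceed $n$ and fail. Everything else is bookkeeping on the cyclic run structure of $\overline{S}$; the degenerate case $b=1$ is consistent, since then $W_i=\{i\}$ and the identity collapses to $d_1(x,y)=|S|=d_H(x,y)$.
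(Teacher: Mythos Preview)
The paper does not actually prove this lemma; it is quoted verbatim from \cite{Singh2025} and used as a black box, so there is no argument in the paper to compare against. Your proof is correct and self-contained: the reformulation $d_b(x,y) = n - \#\{\,i : W_i \subseteq \overline{S}\,\}$ followed by the run decomposition of $\overline{S}$ handles all three cases cleanly, and you correctly isolate the one place where the hypothesis $d_H(x,y)\le n-(b-1)$ is needed (the $J=\emptyset$ subcase of the lower bound).
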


\begin{lemma}\cite[Proposition 2.2]{Singh2025}\label{bandb+1}
For any two vectors $x = (x_0, \dots, x_{n-1})$ and $y = (y_0, \dots, y_{n-1})$ in $\mathbb{F}_2^n$ 
and $0 < d_b(x,y) < n$, we have
\[
d_{b+1}(x,y) \geq d_b(x,y) + 1.
\]
\end{lemma}

\begin{lemma}\cite[Lemma 1]{Luo2024}\label{tuplequal}
Let $x=(x_0,x_1,\dots,x_{n-1})$, $y=(y_0,y_1,\dots,y_{n-1})\in \mathbb{F}_2^n$. If $n\geq b$ and $x_j=y_j$ for each $j\in [0,b-2]$, then $$w_b((x,y))=w_b(x)+w_b(y).$$    
\end{lemma}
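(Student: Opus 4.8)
\emph{Proof proposal.} The plan is to argue directly from the definition: $w_b(z)$ is the Hamming weight of the $b$-symbol read vector $\pi_b(z)$, i.e.\ the number of length-$b$ cyclic windows $(z_i, z_{i+1}, \dots, z_{i+b-1})$ (indices taken modulo the length of $z$) that are not identically zero. Writing $z = (x,y) \in \mathbb{F}_2^{2n}$, I would partition the $2n$ cyclic windows of $z$ into four families according to their position relative to the two blocks: (a) the $n-b+1$ windows lying entirely inside the $x$-block; (b) the $b-1$ windows straddling the junction from $x$ to $y$, of the form $(x_i,\dots,x_n,y_1,\dots,y_{i+b-1-n})$ for $i = n-b+2,\dots,n$; (c) the $n-b+1$ windows lying entirely inside the $y$-block; and (d) the $b-1$ windows straddling the cyclic junction from $y$ back to $x$, of the form $(y_{i'},\dots,y_n,x_1,\dots,x_{i'+b-1-n})$ for $i' = n-b+2,\dots,n$. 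The hypothesis $n \ge b$ guarantees these index ranges are well defined and that no window overlaps itself, and a quick count confirms $2(n-b+1)+2(b-1)=2n$.

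The key observation is that every window in families (a) and (b) is \emph{equal}, as a $b$-tuple, to one of the $n$ cyclic windows of $x$: the windows in (a) are exactly the non-wrapping cyclic windows of $x$, while a straddling window $(x_i,\dots,x_n,y_1,\dots,y_{i+b-1-n})$ in (b) coincides with the wrapping cyclic window $(x_i,\dots,x_n,x_1,\dots,x_{i+b-1-n})$ of $x$, because its overhanging indices $1,\dots,i+b-1-n$ are all at most $b-1$ (since $i \le n$) and $y_j = x_j$ for $j \in [1,b]$ by hypothesis. Symmetrically, the same hypothesis shows that families (c) and (d) reproduce exactly the $n$ cyclic windows of $y$. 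Altogether this produces a bijection between the $2n$ cyclic windows of $(x,y)$ and the disjoint union of the $n$ cyclic windows of $x$ and the $n$ cyclic windows of $y$, under which each window is carried to an equal $b$-tuple.

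Since a window is counted by $w_b$ precisely when it is nonzero, and the bijection preserves the value of every window, the number of nonzero windows of $(x,y)$ equals the number of nonzero windows of $x$ plus the number of nonzero windows of $y$; that is, $w_b((x,y)) = w_b(x) + w_b(y)$, as claimed.

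I expect the only real obstacle to be bookkeeping: pinning down the cyclic index arithmetic and the boundary ranges precisely, and verifying cleanly that the overhang of any boundary-straddling window into the opposite block never reaches past coordinate $b-1$ — which is exactly the range of coordinates forced to agree by the hypothesis. Beyond the inequality $n \ge b$, no genuine estimates are needed; it is a counting argument.
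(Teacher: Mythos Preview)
Your argument is correct: partitioning the $2n$ cyclic $b$-windows of $(x,y)$ into the four families and observing that the overhang of any boundary-straddling window into the neighbouring block never exceeds index $b-1$---exactly the range forced to agree by the hypothesis---yields a value-preserving bijection with the $n+n$ cyclic $b$-windows of $x$ and of $y$, and additivity of $w_b$ follows. The paper itself does not prove this lemma (it is quoted from \cite{Luo2024} without proof), so there is no in-paper argument to compare against; your direct window-counting proof is the natural and standard route.
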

Using similar arguments as in the above lemma, for $x=(x_0,x_1,\dots,x_{n-1})\in \mathbb{F}_2^n$ and $n\geq b$,  we have \begin{align}\label{mcopies}
     w_b(\underbrace{x,x,\dots,x}_{\text{ $m$ copies}})=m\cdot w_b(x).
 \end{align}

\begin{lemma}\cite[Lemma 3.1]{Singh2025}\label{lemma3.1frm12}
 Let $x=(x^{(1)},x^{(2)})$ and $y=(y^{(1)},y^{(2)})$ in $\mathbb{F}_2^{k+r}$, where 
 \begin{align*}
     x^{(1)}=(x_0,x_1,\dots,x_{k-1}), &\ \ \  x^{(2)}=(x_k,x_{k+1},\dots,x_{k+r-1}),\\
      y^{(1)}=(y_0,y_1,\dots,y_{k-1}), &\ \ \  y^{(2)}=(y_k,y_{k+1},\dots,y_{k+r-1}).
 \end{align*}
 Then
 \begin{align*}
     d_b(x^{(1)},y^{(1)})+d_b(x^{(2)},y^{(2)})-(b-1)\leq d_b(x,y)\leq d_b(x^{(1)},y^{(1)})+d_b(x^{(2)},y^{(2)})+(b-1).
 \end{align*}
\end{lemma}

\subsection{Function-Correcting Codes for $b$-Symbol Read Channels}
Function-correcting codes were introduced by Lenz et al. \cite{Lenz2023} in the context of the Hamming metric. Later, in \cite{Singh2025}, the authors discussed a broader framework for function correcting codes to $b$-symbol read channels. Here, we briefly recall the basic definitions and key results which are used throughout the paper. For detailed proofs and discussions, we refer \cite{Lenz2023,Singh2025}. 
\begin{definition}\cite{Singh2025}[Function-correcting $b$-symbol code]
An encoding function
\[
\text{Enc}: \mathbb{F}_2^k \rightarrow \mathbb{F}_2^{k+r}, \quad \text{Enc}(x) = (x, p(x))\  \forall x\in \mathbb{F}_2^k
\]
is said to define an $(f,t)$-function-correcting $b$-symbol code (in short FCBSC) for a positive integer $t$ and   a function

$$f: \mathbb{F}_2^k \rightarrow \text{Im}(f),$$

if for all $x_1, x_2 \in \mathbb{F}_2^k$ with $f(x_1) \neq f(x_2)$, the following holds
\[
d_b(\text{Enc}(x_1), \text{Enc}(x_2)) \geq 2t + 1.
\]
\end{definition}
The set $C=\{Enc(x):\ x\in \mathbb{F}_2^k\}\subseteq \mathbb{F}_2^{k+r}$ is said to be a code over $\mathbb{F}_2$ with a redundancy $r$ (equivalently, of length $k+r$). The elements of $C$ are called codewords.
By definition, an $(f,t)$-FCBSC is capable of correcting up to $t$  errors occurring in codewords associated with distinct function values. While an error-correcting $b$-symbol code $C$ is capable of correcting up to $t$ errors if and only if the $b$-symbol distance between any two distinct codewords is at least $2t+1$, that is, $d_b(C) \geq 2t + 1$ \cite{b_symbol}. Therefore, systematic error-correcting $b$-symbol codes with $t$ error correction capability, in which the first $k$ bits correspond to message bits, inherently serve as an $(f,t)$-FCBSC for all functions $f:\mathbb{F}_2^k\to \text{Im} (f)$.

\begin{definition}\cite{Singh2025}[Optimal redundancy]
Let $t$ be a positive integer and $f:\mathbb{F}_2^k\to \text{Im}(f)$ be a function.
The optimal redundancy of an $(f,t)$-FCBSC is the smallest positive integer $r$ for which there exists an $(f,t)$-function-correcting $b$-symbol code with redundancy $r$. It is denoted by $r_f^b(k, t).$
\end{definition}
\begin{definition}\cite{Singh2025}[Irregular $b$-symbol distance matrices]
Let $f:\mathbb{F}_2^k \to \text{Im}(f)$ be a function. Consider $M$ vectors $x_1, \ldots, x_{M} \in \mathbb{F}_2^k$. Then, the matrices $B_f^{(1)}(t, x_1, \ldots, x_{M})$ and $B_f^{(2)}(t, x_1, \ldots, x_{M})$ are $M \times M$ irregular $b$-symbol distance matrices defined as follows
\[
[B_f^{(1)}(t, x_1, \ldots, x_{M})]_{ij} =
\begin{cases}
[2t - b + 2 - d_b(x_i, x_j)]^+ & \text{if } f(x_i) \neq f(x_j), \\
0 & \text{otherwise},
\end{cases}
\]
and 
\[
[B_f^{(2)}(t, x_1, \ldots, x_{M})]_{ij} =
\begin{cases}
[2t + b - d_b(x_i, x_j)]^+ & \text{if } f(x_i) \neq f(x_j), \\
0 & \text{otherwise}.
\end{cases}
\]
\end{definition}
In the Hamming metric ($b=1)$, the matrices $B_f^{(1)}$ and $B_f^{(2)}$ coincide and reduce to the distance requirement matrix introduced in \cite{Lenz2023} and defined as follows.
    \begin{definition}\cite{Lenz2023}[Distance Requirement Matrix]
        Let $x_1, \ldots, x_{M} \in \mathbb{Z}_2^k$.  
        The \emph{distance requirement matrix} of a function $f$, denoted by 
        $D_f(t, x_1, \ldots, x_{M})$, is the $M \times M$ matrix whose $(i,j)$-th entry is defined as
        \[
        [D_f(t, x_1, \ldots, x_{M})]_{ij} =
        \begin{cases}
        \big[\, 2t + 1 - d_H(x_i, x_j) \,\big]^+, & \text{if } f(u_i) \ne f(u_j), \\[4pt]
        0, & \text{otherwise}.
        \end{cases}
        \]      
    \end{definition}

\begin{definition}\cite{Singh2025}[$B_b$-Code]
A set of codewords $\mathcal{P} = \{ \boldsymbol{p}_1, \ldots, \boldsymbol{p}_M \}$ is said to be a {$B$-irregular $b$-symbol distance code} (in short, {$B_b$-code}) for some matrix $B \in \mathbb{N}_0^{M \times M}$ if there exists an ordering of the codewords in $\mathcal{P}$ such that
\[
d_b(\boldsymbol{p}_i, \boldsymbol{p}_j) \geq [B]_{ij} \quad \text{for all } i, j \in \{1, 2, \ldots, M\}.
\]
\end{definition}

Let $N_b(B)$ denote the smallest positive integer $r$ for which there exists a $B_b$-code of length $r$. The following lemma gives a generalization of the Plotkin bound for irregular distance codes in the Hamming metric.
\begin{lemma}\cite[Lemma 1]{Lenz2023}\label{lemma1frm7}
 Let $b=1$ (the Hamming metric). For any distance matrix  $B_f^{(1)}$  of order $M\times M$ over positive integers, we have 
 \begin{align*}
     N_H(B_f^{(1)})\geq \begin{cases}
         \frac{4}{M^2}\sum_{i,j;i<j}[B_f^{(1)}]_{ij} & \text{ if } M \text{is even},\\
         \frac{4}{M^2-1}\sum_{i,j;i<j}[B_f^{(1)}]_{ij} & \text{ if } M \text{is odd}.
     \end{cases}
 \end{align*}
\end{lemma}

\begin{lemma}\cite[Lemma 4.6]{Singh2025}\label{nbnh}
    Let $\lambda$ and $t$ be positive integers with $\lambda > 2^{b-1}$
and $2t \geq b$. Then,
\begin{equation*}
    N_b(\lambda,2t) \leq N_H(\lambda,2t-b+1).
\end{equation*}
\end{lemma}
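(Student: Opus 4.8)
The goal is to show $N_b(\lambda,2t) \leq N_H(\lambda,2t-b+1)$ when $\lambda > 2^{b-1}$ and $2t \geq b$. The plan is to start from an optimal Hamming-metric $\boldsymbol{B}$-code realizing $N_H(\lambda, 2t-b+1)$, i.e., a set of $\lambda$ binary vectors $\{p_1,\ldots,p_\lambda\}$ of length $r = N_H(\lambda, 2t-b+1)$ with pairwise Hamming distance $d_H(p_i,p_j) \geq 2t-b+1$ for all $i \neq j$, and to argue the same set of vectors (viewed as a codebook for the $b$-symbol channel) is a $\boldsymbol{B_b}$-code for the constant matrix with off-diagonal entry $2t$, hence $N_b(\lambda, 2t) \leq r$. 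The key is therefore a distance conversion: I want to conclude $d_b(p_i,p_j) \geq 2t$ from $d_H(p_i,p_j) \geq 2t-b+1$.

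The conversion should follow from Lemma \ref{handb}. Fix distinct indices $i,j$ and write $h = d_H(p_i,p_j) \geq 2t-b+1 \geq 1$ (positivity uses $2t \geq b$, so $2t-b+1 \geq 1$). Split into cases according to that lemma. If $h > r-(b-1)$, then $d_b(p_i,p_j) = r$, and since any $\boldsymbol{B_b}$-code of length $r$ trivially has all nonzero distances at most $r$, I should check $r \geq 2t$; this ought to hold because a length-$r$ code with $\lambda \geq 2$ codewords and minimum Hamming distance $2t-b+1$ forces $r \geq 2t-b+1$, and more care may be needed to push this to $r \geq 2t$ — possibly invoking that $\lambda > 2^{b-1} \geq 2$ so the code is nontrivial, or using a Singleton/Plotkin-type argument, or simply observing the claim $N_b(\lambda,2t)\le r$ is vacuous/handled separately if $r < 2t$. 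In the main case $0 < h \leq r-(b-1)$, Lemma \ref{handb} gives $d_b(p_i,p_j) \geq h + (b-1) \geq (2t-b+1) + (b-1) = 2t$, exactly what is needed.

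So the heart of the argument is the clean inequality $d_b \geq d_H + (b-1)$ from part 2 of Lemma \ref{handb}, together with bookkeeping on the edge case $d_H > r-(b-1)$. The role of the hypothesis $\lambda > 2^{b-1}$ is presumably to guarantee the length $r = N_H(\lambda,2t-b+1)$ is large enough (at least $2t$, or at least $b$, so the $b$-symbol metric is well-defined and the edge case behaves), likely via the Plotkin-type bound of Lemma \ref{plotb} or a packing bound: with more than $2^{b-1}$ codewords at Hamming distance $\geq 2t-b+1$, the length cannot be too small. I would make this precise by contradiction — if $r \leq 2t-1$ then combined with minimum distance $2t-b+1$ the number of codewords is bounded by something at most $2^{b-1}$, contradicting $\lambda > 2^{b-1}$.

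The main obstacle I anticipate is precisely this length lower bound handling the $h > r-(b-1)$ branch: making sure that when $d_b$ saturates to $r$ we still have $r \geq 2t$. If that fails for small $r$, one must either rule it out using $\lambda > 2^{b-1}$ (the natural place this hypothesis enters) or observe that the inequality $N_b(\lambda,2t) \le N_H(\lambda,2t-b+1)$ can only be violated if $N_H(\lambda, 2t-b+1) < N_b(\lambda, 2t)$, and then derive a contradiction by transporting an optimal $b$-symbol code back through part 2 of Lemma \ref{handb} in the reverse direction. I expect the cleanest writeup picks the forward direction (build the $b$-symbol code from the Hamming code) and disposes of the saturation case in one or two lines using the hypothesis on $\lambda$.
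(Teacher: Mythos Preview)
The paper does not prove this lemma; it is quoted verbatim from \cite[Lemma 4.6]{Singh2025} without proof, so there is nothing in the paper to compare your argument against.

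That said, your approach is sound and is the natural one. The only step you leave slightly vague is the saturation case, and it closes exactly as you suspect via the Singleton bound: a binary code of length $r$ with minimum Hamming distance $2t-b+1$ has at most $2^{r-(2t-b+1)+1}=2^{r-2t+b}$ codewords, so $\lambda>2^{b-1}$ forces $r-2t+b>b-1$, i.e.\ $r\geq 2t$. With this in hand, the case $d_H(p_i,p_j)>r-(b-1)$ from Lemma~\ref{handb} gives $d_b(p_i,p_j)=r\geq 2t$, and the main case $0<d_H(p_i,p_j)\leq r-(b-1)$ gives $d_b(p_i,p_j)\geq d_H(p_i,p_j)+(b-1)\geq (2t-b+1)+(b-1)=2t$ as you wrote. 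Hence the Hamming-optimal code is already a $b$-symbol code of minimum $b$-distance $2t$, and $N_b(\lambda,2t)\leq r=N_H(\lambda,2t-b+1)$.
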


\begin{theorem}\cite[Theorem 7]{Rajput2025}\label{thm7frm11}
 Let $t$ be a positive integer. For any locally $(\lambda,2t,b=1)$-function $f$, the optimal redundancy of an $(f,t)$-FCC is bounded by 
 $$r_f^H(k,t)\leq N_H(\lambda,2t).$$
\end{theorem}

\begin{corollary}\label{cor3.3frm12}\cite[Corollary 3.3]{Singh2025}
Let $t$ be a positive integer and $f:\mathbb{F}_2^k\to \text{Im}(f)$ be a function with $|\text{Im}(f)|\geq 2$ and $t>b-1$. Then, $r_f^b(k,t)\geq 2(t-b+1)$.    
\end{corollary}



\section{Locally $\mathbf(\lambda,\rho,b)$-function}\label{sectionlocally}

In this section, we introduce the concept of locally $(\lambda,\rho)$-functions over $b$-symbol read channels, which we refer to succinctly as locally $(\lambda,\rho,b)$-functions. Every function $f$ exhibits $(\lambda,\rho,b)$-locality for appropriate choices of $\lambda$, $\rho$, and $b$. We investigate various recurrence relations associated with $(\lambda,\rho,b)$-local functions, focusing on how these parameters interact. We improve several results established in \cite{Rajput2025} and further generalize them to the setting of $b$-symbol read channels. First, we formally define the notion of a locally $(\lambda,\rho,b)$-function.

A function ball is the collection of function values that lie within a specified radius, measured with respect to the $b$-symbol metric, around a given point. Formally defined as follows.
\begin{definition}\cite{Singh2025}[Function Ball]
Let \( f : \mathbb{F}_2^k \to \operatorname{Im}(f) \) be a function. For a vector \( u \in \mathbb{F}_2^k \) and a non-negative integer \( \rho \), the \emph{function ball} of \( f \) with radius \( \rho \) centered at \( u \) is defined as
\[
B_f^b(u, \rho) = \{ f(u') \mid u' \in \mathbb{F}_2^k \text{ and } d_b(u, u') \leq \rho \}.
\]
For $b=1$, a function ball in the Hamming metric is denoted by $B_f(u,\rho)$ and defined as
$$B_f(u, \rho) = \{ f(u') \mid u' \in \mathbb{F}_2^k \text{ and } d_H(u, u') \leq \rho \}.$$
\end{definition}

\begin{definition}\label{deflocally}
 A function $f:\mathbb{F}_2^k\to \text{Im}(f)$ is said to be a  locally $(\lambda, \rho, b)$-function if $$|B_f^b(u,\rho)|\leq \lambda$$ for all $u\in \mathbb{F}_2^k$.
\end{definition}
Determining the smallest value of $\lambda$ for which a given function $f$ is a locally $(\lambda,\rho,b)$-function, for fixed values of $\rho$ and $b$, presents a nontrivial challenge. It is easy to observe that any function is $(\lambda,\rho,b)$-local when $\lambda = |\text{Im}(f)|$, irrespective of the choices of $\rho$ and $b$. Similarly, when $\rho = k$, the function $f$ is  $(\lambda,\rho,b)$-local only if $\lambda \geq |\text{Im}(f)|$. Also,  for $2 \leq \beta \leq \lambda$, every locally $(\beta, \rho,b)$-function is a locally $(\lambda, \rho, b)$-function. Since these scenarios are relatively straightforward and offer limited theoretical insights, our attention is primarily directed toward the study of locally $(\lambda,\rho,b)$-functions under the more restrictive conditions $\lambda \leq |\text{Im}(f)|$ and $\rho \leq k$.  We denote $\lambda_s$ to be the smallest value of $\lambda$ for which a function $f$ is a locally $(\lambda,\rho,b)$-function. Note that 
$|B_f^b(u,\rho)|\leq |B_b(u,\rho)|$ for all $u \in \mathbb{F}_2^k$. In fact, for all $u\in\mathbb{F}_2^k$, we have  $|B_b(u,\rho)|= |B_b(\mathbf{0},\rho)|$. This observation gives the following result.
\begin{corollary}\label{balllocal}
   Let $f:\mathbb{F}_2^k\to \text{Im}(f)$ be a locally $(\lambda_s,\rho,b)$-function. Then $\lambda_s\leq |B_b(\mathbf{0},\rho)|$.
\end{corollary}

    \begin{proposition}\label{recb}
         If $2\leq \rho+1\leq k-1$ and $f:\mathbb{F}_2^k\to \text{Im}(f)$ is a locally $(\lambda, \rho, b)$- function, then $f$ is a locally $(\lambda, \rho+1,b+1)$-function.  
    \end{proposition}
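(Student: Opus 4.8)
The plan is to deduce the proposition from the set inclusion $B^{b+1}(u,\rho+1)\subseteq B^{b}(u,\rho)$ of $b$-symbol balls in $\mathbb{F}_2^k$, valid for every $u\in\mathbb{F}_2^k$. Granting this, applying $f$ to both sides gives $B_f^{b+1}(u,\rho+1)\subseteq B_f^{b}(u,\rho)$, so $|B_f^{b+1}(u,\rho+1)|\le|B_f^{b}(u,\rho)|\le\lambda$ because $f$ is locally $(\lambda,\rho,b)$; since $u$ is arbitrary, this is exactly the assertion that $f$ is locally $(\lambda,\rho+1,b+1)$. Thus the whole proof reduces to the purely metric implication that, for all $u,u'\in\mathbb{F}_2^k$, $d_{b+1}(u,u')\le\rho+1$ forces $d_b(u,u')\le\rho$.

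To prove this implication I would first record a coordinatewise comparison of the two read vectors. Writing $\pi_b(x)_i=(x_i,x_{i+1},\dots,x_{i+b-1})$ with indices modulo $k$, one has $\pi_{b+1}(x)_i=(\pi_b(x)_i,\,x_{i+b})$; hence, with $D=\{j:x_j\ne y_j\}$ and $S_m=\{i:\pi_m(x)_i\ne\pi_m(y)_i\}$, we obtain $S_{b+1}=S_b\cup(D-b)$ where $D-b=\{j-b:j\in D\}$, and in particular $d_b(x,y)=|S_b|\le|S_{b+1}|=d_{b+1}(x,y)$ for all $x,y$. The crucial refinement is that this inequality is \emph{strict} whenever $x\ne y$ and $d_{b+1}(x,y)<k$: if equality held then $D-b\subseteq S_b$, and since $S_b=\bigcup_{j\in D}\{j-b+1,\dots,j\}$ (as $i\in S_b$ iff the length-$b$ window starting at $i$ meets $D$), this forces every $j\in D$ to have an element of $D$ among the $b$ residues $j-b,\dots,j-1$, i.e.\ every cyclic gap between consecutive elements of $D$ has length at most $b$; but then every residue lies within distance $b-1$ (going forward) of some element of $D$, so $S_b$ is all of $\mathbb{Z}_k$, giving $d_b(x,y)=k$ and hence $d_{b+1}(x,y)\ge d_b(x,y)=k$, contradicting $d_{b+1}(x,y)<k$.

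Finishing is then immediate: given $d_{b+1}(u,u')\le\rho+1$, the hypothesis $\rho+1\le k-1$ forces $d_{b+1}(u,u')<k$; if $u=u'$ then $d_b(u,u')=0\le\rho$, while if $u\ne u'$ the strict inequality just proved gives $d_b(u,u')\le d_{b+1}(u,u')-1\le\rho$, which is the implication required above. The step I expect to be the real obstacle is this strictness claim: combining only the estimates of Lemma~\ref{handb}, namely $d_b(x,y)\le b\cdot d_H(x,y)$ and $d_{b+1}(x,y)\ge d_H(x,y)+b$, yields merely $d_b(x,y)\le b(\rho+1-b)$, which is far too weak once $b\ge2$. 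So one has to analyze the cyclic gap structure of the difference support directly, and the hypothesis $\rho+1\le k-1$ is used precisely to rule out the saturated regime $d_{b+1}=k$, which is exactly where the strict inequality---and hence the proposition---breaks down.
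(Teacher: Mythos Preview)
Your approach is essentially the paper's: both arguments reduce to the inclusion $B_f^{b+1}(u,\rho+1)\subseteq B_f^{b}(u,\rho)$ via the metric inequality $d_b(u,v)\le d_{b+1}(u,v)-1$. The paper simply asserts this inequality as known, while you supply a complete proof of it and correctly pinpoint that the hypothesis $\rho+1\le k-1$ is used precisely to avoid the saturated case $d_{b+1}(u,v)=k$, where the strict inequality can fail.
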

    \begin{proof}
        Let $f$ be a locally $(\lambda, \rho, b)$-function. By definition, we have 
        $|B^b_f(u, \rho)| \le \lambda$ for all $ u \in \mathbb{F}_2^k.$ If $f(v) \in B^{b+1}_f(u, \rho+1)$, then $d_{b+1}(u,v) \le \rho+1$.  By Lemma \ref{bandb+1}, $d_b(u,v) \le d_{b+1}(u,v) - 1 \le \rho$,  
        therefore $f(v) \in B^b_f(u, \rho)$. Thus $B^{b+1}_f(u, \rho+1) \subseteq B^b_f(u, \rho)$, and hence 
        \[
        |B^{b+1}_f(u, \rho+1)| \le |B^b_f(u, \rho)| \le \lambda.
        \]
        Therefore, $f$ is a locally $(\lambda, \rho+1, b+1)$-function.
    \end{proof}

The next proposition establishes a connection between the containment properties of balls under the Hamming distance and those under the $b$-symbol distance. 
\begin{proposition}\cite[Proposition 2]{Chen2025}.
    $B_b(u,\rho)\subset B_H(u,\rho-b+1)$ and $B_H(u,\rho)\subset B_b(u,b\rho)$. Also, $B_b(u,\rho)=\{u\}$ for $\rho\leq b-1$ and $B_b(u,b)=B_H(u,1)$.
\end{proposition}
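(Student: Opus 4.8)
The plan is to deduce all four assertions as immediate consequences of Lemma~\ref{handb}, which already records the two-sided comparison between the Hamming and $b$-symbol distances; the work then amounts to a short case analysis on $d_H(u,\cdot)$ in each instance. Throughout I would keep in force the standing assumption (implicit whenever $b$-symbol distances are considered) that the block length $n$ is large enough relative to $b$ that the middle regime $0<d_H\le n-(b-1)$ of Lemma~\ref{handb} governs the distances in play. I would also note that the first containment is meant for $b-1\le\rho<n$: the ball $B^H(u,\rho-b+1)$ is only meaningful once $\rho\ge b-1$, and the extreme case $d_H(u,y)>n-(b-1)$, which forces $d_b(u,y)=n$, cannot arise when $\rho<n$.

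For $B^b(u,\rho)\subseteq B^H(u,\rho-b+1)$ I take $y$ with $d_b(u,y)\le\rho$; if $y=u$ there is nothing to prove, and otherwise the left inequality of Lemma~\ref{handb}(2), namely $d_H(u,y)+(b-1)\le d_b(u,y)$, gives $d_H(u,y)\le\rho-b+1$. For $B^H(u,\rho)\subseteq B^b(u,b\rho)$ I take $d_H(u,y)\le\rho$ and apply the right inequality $d_b(u,y)\le b\cdot d_H(u,y)\le b\rho$; in the degenerate case $d_H(u,y)>n-(b-1)$ one has $d_b(u,y)=n$, and since then $\rho\ge n-b+2$, a one-line check shows $b\rho\ge n$ whenever $n\ge b$, so no extra hypothesis is needed here. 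For $\rho\le b-1$: any $y\neq u$ has $d_H(u,y)\ge 1$, hence $d_b(u,y)\ge d_H(u,y)+(b-1)\ge b>b-1\ge\rho$, so $y\notin B^b(u,\rho)$ and therefore $B^b(u,\rho)=\{u\}$. Finally, for $B^b(u,b)=B^H(u,1)$ I prove both inclusions: if $d_H(u,y)\le 1$ then $d_b(u,y)\le b\cdot d_H(u,y)\le b$, giving $B^H(u,1)\subseteq B^b(u,b)$; conversely, if $d_b(u,y)\le b$ and $y\neq u$, then $d_H(u,y)+(b-1)\le d_b(u,y)\le b$ forces $d_H(u,y)\le 1$, giving $B^b(u,b)\subseteq B^H(u,1)$.

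The main obstacle is not any single deep step but the bookkeeping around the boundary regimes of Lemma~\ref{handb}: one must recognize that the first containment needs $b-1\le\rho<n$ to hold literally, and confirm that the collapse $d_b(u,y)=n$ in the extreme Hamming regime does not spoil the second containment or the two equalities once $n\ge b$ (indeed $n\ge b+1$ is needed for the last equality, since at $n=b$ a single Hamming error already saturates $d_b$). With those ambient conditions recorded, each of the four claims reduces to a short inequality chase, paralleling the argument of \cite[Proposition 2]{Chen2025}.
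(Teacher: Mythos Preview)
Your proposal is correct. The paper does not supply its own proof of this proposition; it simply records the statement and refers the reader to \cite[Proposition~2]{Chen2025}. Your argument, deriving all four assertions by a case analysis on $d_H(u,\cdot)$ together with the two-sided comparison inequalities of Lemma~\ref{handb}, is exactly the natural route and matches what one would expect the cited proof to do. Your care in flagging the boundary regimes---that the first containment is only sensible for $b-1\le\rho<n$, that the extreme case $d_H>n-(b-1)$ must be checked separately for the second containment, and that the final equality requires $n\ge b+1$---is appropriate and in fact more explicit than what the paper records.
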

The following results arise immediately as consequences of the above proposition.


\begin{corollary}
  $B^b_f(u,\rho)\subset B^H_f(u,\rho-b+1)$ and $B^H_f(u,\rho)\subset B^b_f(u,b\rho)$. Moreover, $B^b_f(u,\rho)=\{f(u)\}$ for $\rho\leq b-1$ and $B^b_f(u,b)=B^H_f(u,1)$.   
\end{corollary}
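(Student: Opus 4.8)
The plan is to derive the corollary directly from the preceding proposition by transporting its set inclusions forward through the function $f$. The first step is to record the elementary reformulation $B_f^b(u,\rho)=f\bigl(B^b(u,\rho)\bigr)$ and $B_f^H(u,\rho)=f\bigl(B^H(u,\rho)\bigr)$: indeed, by definition $f(v)\in B_f^b(u,\rho)$ holds exactly when there is some $u'\in\mathbb{F}_2^k$ with $d_b(u,u')\le\rho$ and $f(u')=f(v)$, which is precisely the assertion that $f(v)$ lies in the image of the metric ball $B^b(u,\rho)$ under $f$, and similarly in the Hamming case. Once this is in place, the corollary is immediate: since images preserve inclusions, applying $f$ to $B^b(u,\rho)\subseteq B^H(u,\rho-b+1)$ gives $B_f^b(u,\rho)\subseteq B_f^H(u,\rho-b+1)$, and applying $f$ to $B^H(u,\rho)\subseteq B^b(u,b\rho)$ gives $B_f^H(u,\rho)\subseteq B_f^b(u,b\rho)$.

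For the two equalities I would again just push the corresponding equalities of metric balls through $f$. When $\rho\le b-1$ the proposition gives $B^b(u,\rho)=\{u\}$, so $B_f^b(u,\rho)=f(\{u\})=\{f(u)\}$; and $B^b(u,b)=B^H(u,1)$ forces the two images to agree, yielding $B_f^b(u,b)=B_f^H(u,1)$.

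The only obstacle here is bookkeeping rather than mathematics: one should state the "function ball $=$ image of metric ball" identity cleanly, read all inclusions as $\subseteq$, and note the degenerate case $\rho-b+1\le 0$, where $B_f^H(u,\rho-b+1)=\{f(u)\}$ and the first inclusion still holds because the last part of the corollary already identifies $B_f^b(u,\rho)$ with $\{f(u)\}$ in that range. No new metric estimates beyond those furnished by the proposition are required.
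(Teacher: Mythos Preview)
Your proposal is correct and matches the paper's approach: the paper states only that the corollary ``arises immediately as a consequence of the above proposition,'' and your argument---pushing the proposition's ball inclusions and equalities forward through $f$ via the identity $B_f^b(u,\rho)=f\bigl(B^b(u,\rho)\bigr)$---is precisely the intended immediate derivation, with more detail than the paper itself supplies.
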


\begin{corollary}\label{recball}
    Let $f:\mathbb{F}_2^k\to \text{Im}(f)$ be a function.  
    \begin{enumerate}
        \item If $f$ is  a locally $(\lambda, \rho, b=1)$-function, then $f$ is a locally $(\lambda,\rho+b-1,b)$-function.
        \item If $f$ is  a locally $(\lambda, b\rho, b)$-function, then $f$ is  a locally $(\lambda,\rho,b=1)$-function.
        \item For, $\rho\leq b-1$, $f$ is locally $(1,\rho,b)$-function. 
        \item $f$ is a locally $(\lambda,b,b)$-function if and only if $f$ is a locally $(\lambda,\rho=1, b=1)$-function.
    \end{enumerate}
\end{corollary}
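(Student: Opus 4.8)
The plan is to derive \textbf{Corollary \ref{recball}} directly from the preceding proposition, which asserts the set inclusions $B^b(u,\rho)\subset B^H(u,\rho-b+1)$, $B^H(u,\rho)\subset B^b(u,b\rho)$, together with $B^b(u,\rho)=\{u\}$ for $\rho\le b-1$ and $B^b(u,b)=B^H(u,1)$. The only extra ingredient needed is the (purely set-theoretic) observation that function balls are images of ordinary balls under $f$, so an inclusion $B^{\cdot}(u,\cdot)\subseteq B^{\cdot\cdot}(u,\cdot\cdot)$ immediately yields the corresponding inclusion $B_f^{\cdot}(u,\cdot)=f(B^{\cdot}(u,\cdot))\subseteq f(B^{\cdot\cdot}(u,\cdot\cdot))=B_f^{\cdot\cdot}(u,\cdot\cdot)$ and hence a cardinality bound $|B_f^{\cdot}(u,\cdot)|\le|B_f^{\cdot\cdot}(u,\cdot\cdot)|$. (Note that this image-monotonicity is exactly what is recorded in the unlabeled Corollary just above \textbf{Corollary \ref{recball}}, so I would simply cite that.)

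With that in hand each of the four items is a one-line consequence. For item (1): if $f$ is locally $(\lambda,\rho,1)$ then $|B_f^H(u,\rho)|\le\lambda$ for all $u$; since $B_f^b(u,\rho+b-1)\subseteq B_f^H(u,(\rho+b-1)-b+1)=B_f^H(u,\rho)$ by the first inclusion in the Corollary above (with radius $\rho+b-1$ in place of $\rho$), we get $|B_f^b(u,\rho+b-1)|\le\lambda$, i.e. $f$ is locally $(\lambda,\rho+b-1,b)$. For item (2): if $f$ is locally $(\lambda,b\rho,b)$ then $|B_f^b(u,b\rho)|\le\lambda$; since $B_f^H(u,\rho)\subseteq B_f^b(u,b\rho)$ by the second inclusion, $|B_f^H(u,\rho)|\le\lambda$, i.e. $f$ is locally $(\lambda,\rho,1)$. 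For item (3): for $\rho\le b-1$ we have $B_f^b(u,\rho)=\{f(u)\}$, a singleton, so $|B_f^b(u,\rho)|=1\le1$ for all $u$, which is precisely the statement that $f$ is locally $(1,\rho,b)$. For item (4): from $B_f^b(u,b)=B_f^H(u,1)$ we get $|B_f^b(u,b)|=|B_f^H(u,1)|$ for every $u$, so $\max_u|B_f^b(u,b)|\le\lambda$ if and only if $\max_u|B_f^H(u,1)|\le\lambda$, i.e. $f$ is locally $(\lambda,b,b)$ iff $f$ is locally $(\lambda,1,1)$.

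I do not anticipate a genuine obstacle here; the content of the corollary is entirely contained in the proposition and the definition of locality, and the proof is a routine unwinding of definitions. The only point requiring a modicum of care is bookkeeping of the radii (in item (1) one must apply the inclusion $B_f^b(u,\rho')\subset B_f^H(u,\rho'-b+1)$ with the substituted radius $\rho'=\rho+b-1$, and similarly in item (2)), and making sure that each equivalence in item (4) genuinely goes both ways, which it does because the identity $B_f^b(u,b)=B_f^H(u,1)$ holds for \emph{every} center $u$, so the two $\max$-over-$u$ quantities are literally equal. One might also remark for completeness that items (1)--(3) are the function-ball analogues of, respectively, the two containments and the degenerate small-radius case of the preceding proposition, and item (4) is the analogue of the boundary identity $B^b(u,b)=B^H(u,1)$; stating it this way makes clear that nothing beyond the proposition is being used.
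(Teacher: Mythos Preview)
Your proposal is correct and follows exactly the approach the paper intends: the paper states that Corollary~\ref{recball} (and the unlabeled corollary before it) ``arise immediately as consequences of the above proposition'' and gives no further argument, so your detailed unwinding of the four items from the ball-containment proposition via the image-monotonicity of $f$ is precisely the omitted routine verification.
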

\subsection{Locality of the weight distribution function}
In this subsection, we investigate weight distribution functions associated with both the Hamming distance and the $b$-symbol distance. Let $f:\mathbb{F}_2^k\to \mathbb{Z}^{\geq0}$ be the $b$-symbol weight distribution function with threshold $T$, defined as 
$f(u)=\Delta_T^b(u)=\left \lfloor\frac{w_b(u)}{T}\right \rfloor$.
 In the special case when $T=1$, the function reduces to $\Delta^b(u) = w_b(u)$, known as the $b$-symbol weight function. Furthermore, when $b=1$, the corresponding function is referred to as the Hamming weight distribution function, denoted by $\Delta_T^H$. In particular, for $b=1$ and $T=1$, it simplifies to the Hamming weight function, denoted by $w_H$.

\begin{proposition}
Let $\Delta_T^b:\mathbb{F}_2^k\to \text{Im}(f)$ be the $b$-symbol weight distribution function. Then $\Delta_T^b$ is a locally $(\left \lfloor\frac{k}{T}\right\rfloor+1,\rho,b)$-function for any $0 \leq \rho\leq k$. That is $\lambda_s\leq \left \lfloor\frac{k}{T}\right \rfloor+1$. 
\end{proposition}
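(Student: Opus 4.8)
The plan is to show that the $b$-symbol weight distribution function $\Delta_T^b$ takes at most $\lfloor k/T\rfloor + 1$ distinct values on any ball $B^b(u,\rho)$, regardless of $\rho$. The key observation is that the image of $\Delta_T^b$ over all of $\mathbb{F}_2^k$ is already small: since $0 \leq w_b(u) \leq k$ for every $u \in \mathbb{F}_2^k$ (the $b$-symbol weight of a length-$k$ vector cannot exceed $k$, as $\pi_b(u)$ has $k$ coordinates), the value $\lfloor w_b(u)/T\rfloor$ ranges over the integers $0, 1, \ldots, \lfloor k/T\rfloor$. Hence $|\mathrm{Im}(\Delta_T^b)| \leq \lfloor k/T\rfloor + 1$.

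From here the argument is immediate: for any $u \in \mathbb{F}_2^k$ and any $\rho$ with $0 \le \rho \le k$, we have
\[
B_{\Delta_T^b}^b(u,\rho) = \{\Delta_T^b(u') : d_b(u,u') \leq \rho\} \subseteq \mathrm{Im}(\Delta_T^b),
\]
so $|B_{\Delta_T^b}^b(u,\rho)| \leq |\mathrm{Im}(\Delta_T^b)| \leq \lfloor k/T\rfloor + 1$. Since this holds for every $u$, the function $\Delta_T^b$ is a locally $(\lfloor k/T\rfloor + 1, \rho, b)$-function, and therefore $\lambda_s \leq \lfloor k/T\rfloor + 1$ by definition of $\lambda_s$.

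The only point requiring a small amount of care — and the step I would expect to be the main (though modest) obstacle — is justifying the bound $w_b(u) \leq k$ and hence that $\lfloor w_b(u)/T \rfloor$ can be as large as $\lfloor k/T \rfloor$ but no larger. This follows because $\pi_b(u)$ is a vector of length $k$ over $\mathbb{F}_2^b$, so its Hamming weight (which is by definition $w_b(u) = d_b(u,\mathbf{0})$) is at most $k$; moreover $w_b(u) = k$ is attained, e.g., by a vector that makes every cyclic $b$-window nonzero (such as the all-ones vector when $k \ge b$), so the image genuinely has size $\lfloor k/T\rfloor + 1$ in general and the bound on $\lambda_s$ cannot be improved by this argument alone. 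One might also remark that the statement is essentially a triviality once the image size is pinned down, and that the interest lies in contrasting it with Theorem~\ref{balllocal}, which would otherwise only give the far weaker bound $\lambda_s \le |B^b(\mathbf{0},\rho)|$.
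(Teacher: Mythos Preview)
Your proof is correct and follows exactly the same route as the paper: both argue that $0 \le \Delta_T^b(v) \le \lfloor k/T\rfloor$ for all $v$, so the entire image of $\Delta_T^b$ has at most $\lfloor k/T\rfloor + 1$ elements and hence every function ball does as well. The paper's proof is simply a one-line version of your argument, without the additional commentary on attainability or the comparison with Theorem~\ref{balllocal}.
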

\begin{proof}
The proof follows from the fact $0\leq \Delta_T^b(v)\leq \left \lfloor \frac{k}{T}\right \rfloor$ for all $v\in \mathbb{F}_2^k$ and $0\leq \rho \leq k$.     
\end{proof}
The following theorem gives a value of $\lambda$ for which the Hamming weight function $w_H$ serves as a locally $(\lambda, \rho, b=1)$-function. The proof is based on finding the maximum and minimum achievable Hamming weight in an arbitrary function ball.
\begin{theorem}\label{lambda_w}
    For $k \geq 2\rho$, the Hamming weight function $w_H$ is a locally $(2\rho+1,\rho,b=1)$-function. 
\end{theorem}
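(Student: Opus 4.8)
The plan is to bound $|B_{w_H}^H(u,\rho)|$ directly by tracking which Hamming weights are attainable by vectors within distance $\rho$ of a fixed center $u \in \mathbb{F}_2^k$. If $w_H(u) = w$, then flipping at most $\rho$ coordinates of $u$ changes the weight by at most $\rho$ in either direction, so every element of $B_{w_H}^H(u,\rho)$ lies in the integer interval $[\max(0, w-\rho), \min(k, w+\rho)]$. The number of integers in this interval is at most $2\rho + 1$, which already gives $|B_{w_H}^H(u,\rho)| \le 2\rho+1$; together with the hypothesis $k \ge 2\rho$ (which guarantees the interval is not artificially truncated to something even smaller, though truncation only helps) this yields the claim.

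First I would fix an arbitrary $u \in \mathbb{F}_2^k$ and set $w = w_H(u)$. Then I would show the two-sided containment: for any $u'$ with $d_H(u,u') \le \rho$ we have $|w_H(u') - w_H(u)| \le d_H(u,u')\le \rho$, which is the standard fact that Hamming weight is $1$-Lipschitz with respect to Hamming distance (each coordinate flip changes the weight by exactly $\pm 1$). Hence $w_H(u') \in \{w-\rho, w-\rho+1, \dots, w+\rho\} \cap [0,k]$. Second, I would argue the reverse — that when $k \ge 2\rho$ and the center is chosen suitably (e.g. $w = \rho$ gives interval $[0,2\rho]$, or a central $w$ gives a full window) the bound $2\rho+1$ is actually met, so the constant cannot be improved; but strictly for the statement as written only the upper bound is needed, so this second part is optional flavor. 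Finally I would note the bound holds for every $u$, hence $w_H$ is a locally $(2\rho+1,\rho,b=1)$-function.

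There is essentially no serious obstacle here: the only thing to be careful about is the role of $k \ge 2\rho$. Without it the interval $[\max(0,w-\rho),\min(k,w+\rho)]$ could contain fewer than $2\rho+1$ integers, which would still satisfy $|B_{w_H}^H(u,\rho)| \le 2\rho+1$ vacuously, so the hypothesis is not needed for the upper bound per se; its real purpose is to ensure $2\rho+1 \le k+1 = |\operatorname{Im}(w_H)|$, i.e. that the locality parameter $\lambda = 2\rho+1$ is genuinely in the nontrivial regime $\lambda \le |\operatorname{Im}(f)|$ discussed earlier, and to make the value $2\rho+1$ tight rather than an overcount. I would mention this explicitly so the reader sees why the hypothesis appears. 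The argument is short enough that the whole proof is: Lipschitz property $\Rightarrow$ interval containment $\Rightarrow$ cardinality count.
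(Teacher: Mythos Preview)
Your proposal is correct and follows essentially the same approach as the paper: both arguments fix $u$ with $w_H(u)=w$, observe that any $u'$ with $d_H(u,u')\le\rho$ has weight in $[\max(0,w-\rho),\min(k,w+\rho)]$, and count at most $2\rho+1$ integers in that interval. The paper expresses the weight change via $w_H(u')=w+a-b$ with $a+b=d$ and then splits into three cases on $w$ to verify tightness, whereas you invoke the $1$-Lipschitz property directly and treat tightness as an optional remark---but the substance is identical.
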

\begin{proof}
    We claim that for $b=1$, $|B^b_{w_H}(u,\rho)|\leq 2\rho +1$ for all $u\in \mathbb{F}_2^k$. Let $u \in \mathbb{F}_2^k$ and $w_H(u) = w$. Also, let $u' \in \mathbb{F}_2^k$ such that $d_H(u,u') = d \leq \rho$. Then, $w_H(u') \in B_{w_H}^b(u,\rho) $ and can be expressed in terms of the Hamming weight of $u$ as follows
    \begin{equation}
        w_H(u') = w +a - b,   
    \end{equation}
    where $a$ is the number of changes of symbols in $u$ from $0$ to $1$ and $b$  is the number of changes of symbols in $u$ from $1$ to $0$ to obtain $u'$ from $u$.
    As $a + b = d$, the Hamming weight of any arbitrary  $u'$ such that $ w_H(u') \in B_{w_H}^b(u,\rho)$  in terms of $d$ can be given as 
    \begin{equation}\label{wt2}
        w_H(u') = \{w + d - 2b : 0 \leq d \leq \rho \text{ and } 0 \leq b \leq \min(d,w)\}.
    \end{equation}
    From \ref{wt2}, we can see that the Hamming weight inside the function ball ranges over the interval 
    \begin{equation}\label{wt}
        [\max(0,w-\rho), \min(k,w+\rho)].
    \end{equation} 
   The length of this interval depends on $w$.\\
    \textbf{Case 1:} If $\rho \leq w \leq k-\rho$, then  Interval \ref{wt} becomes $[{w - \rho}, {w + \rho}]$ and the length of this interval is $2\rho + 1$.\\ 
    \textbf{Case 2:} If $\rho > w $,  then Interval  \ref{wt} becomes $[0, {w + \rho}]$ and the length of this interval is $w+\rho+1 <2\rho + 1$.\\ 
    \textbf{Case 3:} If  $w > k-\rho,$ then Interval \ref{wt} becomes $[{w - \rho}, k]$ and the length of this interval is $k - w + \rho + 1 < 2\rho +1 $.\\ 
    Hence, from all the above cases, we get $|B^b_{w_H}(u,\rho)|\leq 2\rho +1$ for all $u\in \mathbb{F}_2^k$. This completes the proof. Moreover, the bound is tight as there always exists a $u\in \mathbb{F}_2^k$ such that $\rho\leq w_H(u)\leq k-\rho$.
\end{proof} 

The following example visualizes the function ball for all three cases discussed in Theorem \ref{lambda_w} for the Hamming weight function.
\begin{example}
    Consider the Hamming weight function 
\[
w_H : \mathbb{F}_2^6 \rightarrow \{0,1,2,3,4,5,6\}
\]
The table shows all possible Hamming weights attained in the function ball $B_{w_H}(u,\rho)$ for the radius $\rho=2$ for three different binary vectors  belonging to three cases as discussed in Theorem \ref{lambda_w}\\

\scriptsize $\begin{array}{|c|c|c|c|c|c|}
\Xhline{1.2pt}
\# & d & a & \text{vectors } u' & \text{wt}(u') & |B_{wt}(u,2)| \\
\Xhline{1.2pt}
 \makecell{Case 1\\  u = 000111} & 0 & 0 & u' = 000111 & 3 & 5\\
\cline{2-5}
   & 1 & 0 & u' = \{000011, 000101, 000110\} & 2 &\\
\cline{3-5}
   &   & 1 & u' = \{001111, 010111, 100111\} & 4 &\\
\cline{2-5}
   & 2 & 0 & u' = \{000001, 000100, 000010\} & 1 &\\
\cline{3-5}
   &   & 1 & \text{no change} & - &\\
\cline{3-5}
   &   & 2 & u' = \{011111, 101111, 110111\} & 5 &\\
\Xhline{1.2pt}
\makecell{Case 2 \\ u = 100000} & 0 & 0 & u' = 100000 & 1 & 4 \\
\cline{2-5}
   & 1 & 0 & u' = \{000000\} & 0 &\\
\cline{3-5}
   &   & 1 & u' = \{110000, 101000, 100100, 1000010,100001\} & 2 & \\
\cline{2-5}
   & 2 & 0 & - & - &\\
\cline{3-5}
   &   & 1 & \text{no change} & - &\\
\cline{3-5}
   &   & 2 & u' = \makecell{\{111000,110100,110010,110001,101100,101010, \\ 101001,100110,100110,100101,100011\}} & 3 &\\
\Xhline{1.2pt}
\makecell{Case 3 \\ u = 111111} & 0 & 0 & u' = 111111 & 6 & 3\\
\cline{2-5}
   & 1 & 0 & u' = \{011111, 101111, 110111, 111011, 111101, 111110\} & 5 &\\
\cline{3-5}
   &   & 1 & - & - &\\
\cline{2-5}
   & 2 & 0 & \makecell{u' = \{001111, 010111, 011011, 011101, 011101,\\ 011110, 100111, 101011, 101101, 101110, 110011, 110101,\\ 110110, 111001, 111010, 111100\}} & 4 &\\
\cline{3-5}
   &   & 1 & \text{no change} & - &\\
\cline{3-5}
   &   & 2 & - & - &\\
\Xhline{1.2pt}
\end{array}\label{hamming_weight}
$
\end{example}

In \cite[Corollary 2]{Rajput2025}, the authors showed that the Hamming weight function $w_H$ is a locally  $(4t+2,2t,b=1)$-function. Theorem \ref{lambda_w} generalizes their result for an arbitrary positive integer $\rho$. Moreover, Theorem \ref{lambda_w} improves their result as stated in the following corollary.

\begin{corollary}
    The Hamming weight function $w_H$ is a locally $(4t+1,2t,b=1)$-function.
\end{corollary}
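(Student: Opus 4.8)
The plan is to obtain this as an immediate specialization of Theorem \ref{lambda_w}. That theorem asserts that whenever $k \geq 2\rho$, the Hamming weight function $w_H$ is a locally $(2\rho+1,\rho,b=1)$-function. Substituting $\rho = 2t$ gives $2\rho+1 = 4t+1$ together with the hypothesis $k \geq 4t$, so $w_H$ is a locally $(4t+1,2t,b=1)$-function in the relevant regime. Thus the entire content of the corollary is a numerical instantiation of a result already proved.

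To keep the argument self-contained I would briefly recall the key estimate from the proof of Theorem \ref{lambda_w}, now with $\rho = 2t$: for $u \in \mathbb{F}_2^k$ with $w_H(u) = w$, every $u' \in B_{w_H}(u,2t)$ has Hamming weight $w + d - 2a$ for some $d$ with $0 \le d \le 2t$ and some $a$ with $0 \le a \le \min(d,w)$. Hence the set of weights attained inside the ball is exactly the integer interval $[\max(0,w-2t),\ \min(k,w+2t)]$, whose number of elements is at most $(w+2t) - (w-2t) + 1 = 4t+1$, with equality precisely when $2t \le w \le k-2t$. This yields $|B_{w_H}(u,2t)| \le 4t+1$ for all $u \in \mathbb{F}_2^k$, which is the claim.

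I expect no genuine obstacle here. The only point worth flagging is the same one that appears in Theorem \ref{lambda_w}: the bound is sharp only when $k \geq 4t$, and for smaller $k$ it holds trivially because $|\operatorname{Im}(w_H)| = k+1$ and the function ball can only be smaller. Finally, I would remark how this compares with \cite[Corollary 2]{Rajput2025}, which states the weaker conclusion that $w_H$ is locally $(4t+2,2t,b=1)$: the improvement by one in the locality parameter comes from computing the endpoints of the weight interval exactly, rather than overestimating the interval length as $2\rho+2$.
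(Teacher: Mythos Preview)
Your proposal is correct and follows exactly the paper's approach: the corollary is stated immediately after Theorem~\ref{lambda_w} as a direct specialization with $\rho=2t$, and the paper offers no separate proof beyond that remark. Your additional handling of the case $k<4t$ (where $|\operatorname{Im}(w_H)|=k+1\le 4t+1$ makes the bound trivial) is a small but welcome completeness check that the paper omits, and your closing comparison with \cite[Corollary~2]{Rajput2025} matches the paper's own commentary preceding the corollary.
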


 Next, we prove that the Hamming weight distribution function $\Delta_T^H$ is a locally $(\lambda,\rho,b=1)$-function for a value of $\lambda$. The proof proceeds along the same lines as that of Theorem \ref{lambda_w}, with appropriate modifications. We provide the proof for completeness. For a positive real number $a$, $\{a\}$ denotes the fractional part of $a$, that is, $\{a\}=a-[a]$, where $[a]$ is integer part of $a$.
\begin{theorem}\label{lamhwd}
For $k \geq 2\rho$, the Hamming weight distribution function $\Delta_T^H$ is a locally $(\left\lfloor\frac{2\rho}{T}\right\rfloor+2,\rho,b=1)$-function. 
\end{theorem}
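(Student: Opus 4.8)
The strategy mirrors the proof of Theorem~\ref{lambda_w}, tracking how the quantity $\Delta_T^H(u') = \lfloor w_H(u')/T \rfloor$ can vary as $u'$ ranges over the Hamming ball $B_{w_H}(u,\rho)$. First I would fix $u \in \mathbb{F}_2^k$ with $w_H(u) = w$, and recall from the analysis in Theorem~\ref{lambda_w} (specifically equation~\eqref{wt}) that the set of Hamming weights realized inside the function ball is exactly the integer interval $[\max(0, w-\rho), \min(k, w+\rho)]$. Applying the floor-of-division map $m \mapsto \lfloor m/T \rfloor$ to this interval, the image is the integer interval $\big[\lfloor \max(0,w-\rho)/T \rfloor,\ \lfloor \min(k,w+\rho)/T \rfloor\big]$, since $m \mapsto \lfloor m/T \rfloor$ is a nondecreasing surjection from any interval of consecutive integers onto an interval of consecutive integers. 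Hence $|B_{\Delta_T^H}(u,\rho)|$ equals the length of this last interval.

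Next I would bound that length uniformly in $w$. The worst case is the ``middle'' regime $\rho \le w \le k-\rho$ (which is nonempty precisely because $k \ge 2\rho$), where the weight interval is $[w-\rho, w+\rho]$ of length $2\rho+1$; the boundary cases $w < \rho$ and $w > k-\rho$ give strictly shorter weight intervals and hence no larger image. In the middle regime, the image under the floor map has size at most $\lfloor (w+\rho)/T \rfloor - \lfloor (w-\rho)/T \rfloor + 1$. The elementary inequality $\lfloor (a+s)/T \rfloor - \lfloor a/T \rfloor \le \lfloor s/T \rfloor + 1$ (valid for nonnegative integers $a, s, T$ with $T \ge 1$) applied with $a = w-\rho$ and $s = 2\rho$ yields the bound $\lfloor 2\rho/T \rfloor + 1 + 1 = \lfloor 2\rho/T \rfloor + 2$. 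This gives $|B_{\Delta_T^H}(u,\rho)| \le \lfloor 2\rho/T \rfloor + 2$ for every $u$, which is the claim.

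The main obstacle — really the only place requiring care rather than bookkeeping — is the off-by-one handling of the floor function: a naive estimate gives $\lfloor 2\rho/T \rfloor + 1$ by counting the length of the $w$-interval as $2\rho$ rather than $2\rho+1$, and one must check the extra $+1$ cannot be absorbed, i.e.\ that the bound $\lfloor 2\rho/T \rfloor + 2$ is genuinely attained (for instance at $w$ chosen so that $w-\rho \equiv T-1 \pmod T$, forcing a floor-increment at the very first step). I would include a one-line example or remark exhibiting tightness, paralleling the Example following Theorem~\ref{lambda_w}, and otherwise the proof is a direct transcription of the three-case split there with the floor map post-composed.
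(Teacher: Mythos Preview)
Your proposal is correct and follows essentially the same approach as the paper: both arguments reuse the weight-interval conclusion $[\max(0,w-\rho),\min(k,w+\rho)]$ from Theorem~\ref{lambda_w}, post-compose with the floor map $m\mapsto\lfloor m/T\rfloor$, split into the same three cases according to $w$, and bound the resulting interval length via the floor inequality $\lfloor (a+s)/T\rfloor-\lfloor a/T\rfloor\le\lfloor s/T\rfloor+1$. Your treatment of tightness (choosing $w$ with $w-\rho\equiv T-1\pmod T$) is in fact sharper than the paper's, which asserts tightness in the middle regime without isolating the residue condition.
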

\begin{proof}
    From the proof of Theorem \ref{lambda_w}, we can see that the Hamming weight inside the function ball ranges over the interval $[\max(0,w-\rho), \min(k,w+\rho)]$. Hence, the Hamming weight distribution inside $B_{\Delta^{H}_T}^b(u,\rho)$ ranges over the interval
    \begin{equation}\label{wt_dis}
         \left[\left\lfloor \frac{\max(0,w-\rho)}{T} \right \rfloor, \left\lfloor\frac{\min(k,w+\rho)}{T} \right\rfloor\right].
    \end{equation}
   The length of this interval depends on $w$ as follows.\\
    \textbf{Case 1:}  If $\rho \leq w \leq k-\rho$, then  Interval \ref{wt_dis} becomes $\left[\left\lfloor \frac{w - \rho}{T} \right \rfloor, \left\lfloor \frac{w + \rho}{T} \right\rfloor\right]$ of length $\left\lfloor \frac{w + \rho}{T} \right\rfloor - \left\lfloor \frac{w - \rho}{T} \right \rfloor + 1 $. We consider two further subcases.
    \begin{itemize}
        \item If $\left\{ \frac{w +\rho}{T}\right\} < \left\{ \frac{w - \rho}{T}\right\}$,  then the length of Interval \ref{wt_dis} is $$\left\lfloor \frac{w + \rho}{T} \right\rfloor - \left\lfloor \frac{w - \rho}{T} \right \rfloor + 1  =  \left\lfloor \frac{w + \rho}{T} -  \frac{w - \rho}{T} \right \rfloor + 2 =  \left\lfloor \frac{2\rho}{T} \right \rfloor +2.$$
        \item If $\left\{ \frac{w + \rho}{T}\right\} \geq \left\{ \frac{w - \rho}{T}\right\}$,  then the length of Interval \ref{wt_dis} is $$\left\lfloor \frac{w + \rho}{T} \right\rfloor - \left\lfloor \frac{w - \rho}{T} \right \rfloor + 1  =  \left\lfloor \frac{w + \rho}{T} -  \frac{w - \rho}{T} \right \rfloor + 1 =  \left\lfloor \frac{2\rho}{T} \right \rfloor + 1.$$
    \end{itemize}
    \textbf{Case 2:} If $\rho > w $, then Interval \ref{wt_dis} becomes $\left[0, \left\lfloor\frac{w + \rho}{T}\right\rfloor\right]$ of length $$\left\lfloor \frac{w + \rho}{T} \right\rfloor + 1  < \left\lfloor \frac{2\rho}{T} \right \rfloor +1  < \left\lfloor \frac{2\rho}{T} \right \rfloor +2 .$$ 
    \textbf{Case 3:} If  $w > k-\rho,$ then Interval \ref{wt_dis} becomes $\left[\left\lfloor\frac{w - \rho}{T}\right\rfloor, \left\lfloor\frac{k}{T}\right\rfloor\right]$ of length $$\left\lfloor\frac{k}{T}\right\rfloor - \left\lfloor\frac{w - \rho}{T}\right\rfloor + 1 < \left\lfloor \frac{w+\rho}{T} - \frac{w-\rho}{T}\right\rfloor + 2 \leq \left\lfloor \frac{2\rho}{T} \right \rfloor +2.$$
    That is, the length of Interval \ref{wt_dis} is strictly less than  $ \left\lfloor \frac{2\rho}{T} \right \rfloor +2$.\\
    Thus, we have
    $$
    |B^b_{\Delta^{H}_T}(u,\rho)| \leq \left\lfloor \tfrac{2\rho}{T} \right\rfloor + 2
    \quad \text{for all } u \in \mathbb{F}_2^k.
    $$
     This completes the proof.
\end{proof}
Observe that the bound in Theorem \ref{lamhwd} is tight for $T>0$ such that $\bigl\{\tfrac{w+\rho}{T}\bigr\} < \bigl\{\tfrac{w-\rho}{T}\bigr\}$ for some $\rho \leq w\leq k-\rho$. In this case, the situation falls under the first subcase of Case 1 in the proof of Theorem \ref{lamhwd}. In the subsequent corollaries, we explicitly describe these cases for different values of $T$.


\begin{corollary}
    Let $k,\  \rho \in \mathbb{N}$ with $k > 2\rho$. Then the Hamming weight distribution function is a locally $\left(\lambda_s=\left\lfloor\frac{2\rho}{T}\right\rfloor+2,\rho,b=1\right)$-function for $T=2\rho+1$.
\end{corollary}
\begin{proof}
As $k>2\rho$, there exists $u\in \mathbb{F}_2^k$ with $w_H(u)=w=\rho+1$. Consequently, $\rho\leq w\leq k-\rho$ and 
\begin{align*}
    0=\left \{\frac{w+\rho}{T}\right \}< \left \{\frac{w-\rho}{T}\right \}=\frac{1}{T}.
\end{align*}
 Thus, $u$ falls under the first subcase of Case 1 in the proof of Theorem \ref{lamhwd} and we have  $$|B_{\Delta_T^H}^b(u, \rho)|=\left\lfloor\frac{2\rho}{T}\right\rfloor+2.$$
This completes the proof.
\end{proof}
    \begin{corollary}
        For $k, \rho, T \in \mathbb{N}$ with $k \geq 2\rho$, the Hamming weight distribution function is a locally $\left(\left\lfloor\frac{2\rho}{T}\right\rfloor+1,\rho,b=1\right)$-function if $T$ divides $\rho$ or $k < T.$
    \end{corollary}   
    \begin{proof}
        In Theorem \ref{lamhwd} case $1$, it is straightforward that the condition $\left\{ \frac{w + \rho}{T}\right\} \geq \left\{ \frac{w - \rho}{T}\right\}$ is satisfied when $T$ divides $\rho$. Hence, for all three cases, we get $|B^b_{\Delta^{H}_T}(u,\rho)|\leq \left\lfloor \frac{2\rho}{T} \right \rfloor + 1$, making the Hamming weight distribution function a locally $(\lfloor\frac{2\rho}{T}\rfloor+1,\rho,b=1)$-function. \\
        For $k < T$ and $\rho \leq w \leq k-\rho$, we have $0 \leq \frac{w-\rho}{T} \leq \frac{k-2\rho}{T} < 1$ and $\frac{2\rho}{T} \leq \frac{w+\rho}{T} \leq \frac{k}{T} < 1$. Thus, we have $$ \left\{\frac{w-\rho}{T}\right\} = \frac{w-\rho}{T} \text{ and } \left\{\frac{w+\rho}{T}\right\} = \frac{w+\rho}{T}.$$
        Hence $\left\{ \frac{w + \rho}{T}\right\} \geq \left\{ \frac{w - \rho}{T}\right\}$. Therefore, $\Delta^H_T$ is a locally $(\lfloor\frac{2\rho}{T}\rfloor+1,\rho,b=1)$-function.
    \end{proof}

    
Using the recurrence relation in Proposition \ref{recb}, we have the following corollaries.

\begin{corollary}
    The Hamming weight distribution function $\Delta_T^H$ is a locally $(\lfloor\frac{2\rho}{T}\rfloor+2,\rho+b,b+1)$-function. 
\end{corollary}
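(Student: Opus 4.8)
The plan is to combine Theorem~\ref{lamhwd} with the recurrence relation of Proposition~\ref{recb} in the most direct way possible. Theorem~\ref{lamhwd} tells us that, for $k \geq 2\rho$, the function $\Delta_T^H$ is a locally $(\lfloor \frac{2\rho}{T}\rfloor + 2, \rho, b=1)$-function. Proposition~\ref{recb} states that a locally $(\lambda, \rho, b)$-function is also a locally $(\lambda, \rho+1, b+1)$-function whenever $2 \leq \rho+1 \leq k-1$. So the first step is simply to feed the output of Theorem~\ref{lamhwd} into Proposition~\ref{recb}: taking $b=1$ and $\lambda = \lfloor \frac{2\rho}{T}\rfloor + 2$, we obtain that $\Delta_T^H$ is a locally $(\lfloor \frac{2\rho}{T}\rfloor + 2, \rho+1, 2)$-function.

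The second step is to iterate. Applying Proposition~\ref{recb} a total of $b-1$ times, each application increasing both the radius and the symbol-read parameter by one while leaving $\lambda$ fixed, carries us from $(\lambda, \rho, 1)$ up to $(\lambda, \rho + (b-1), b)$, which is exactly the claimed statement: $\Delta_T^H$ is a locally $(\lfloor \frac{2\rho}{T}\rfloor + 2, \rho + b - 1, b)$-function. (I note the corollary as stated writes the parameters as $(\lfloor\frac{2\rho}{T}\rfloor+2, \rho+b, b+1)$, which is the same assertion with $b$ replaced by $b+1$; one should just be consistent with whichever indexing the surrounding text uses, and invoke Corollary~\ref{recball}(1) directly if preferred, since that corollary already packages the ``$b=1 \Rightarrow$ general $b$'' shift of adding $b-1$ to the radius.)

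Alternatively, and perhaps more cleanly, one can bypass the iteration entirely by citing Corollary~\ref{recball} part (1): it says that a locally $(\lambda, \rho, b=1)$-function is a locally $(\lambda, \rho + b - 1, b)$-function. Plugging in $\lambda = \lfloor \frac{2\rho}{T}\rfloor + 2$ from Theorem~\ref{lamhwd} gives the result in one line. The only point requiring a sliver of care is the hypothesis bookkeeping: Theorem~\ref{lamhwd} needs $k \geq 2\rho$, and the repeated use of Proposition~\ref{recb} needs $\rho + j + 1 \leq k - 1$ at each intermediate stage $j = 0, \ldots, b-2$; both are mild and should be stated as standing assumptions (or absorbed into the ambient assumption that $k$ is large relative to $\rho$ and $b$, as elsewhere in the paper).

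I do not anticipate any genuine obstacle here — the corollary is a formal consequence of two previously established facts, and the ``proof'' is essentially a one- or two-sentence invocation. The only thing worth double-checking is that the $\lambda$ value does not need to be adjusted under the transformation, and indeed it does not: Proposition~\ref{recb} shows $B_f^{b+1}(u,\rho+1) \subseteq B_f^b(u,\rho)$, so the cardinality bound $\lambda$ is inherited verbatim. Hence the same $\lfloor \frac{2\rho}{T}\rfloor + 2$ that works in the Hamming setting works after shifting to the $b$-symbol setting with the radius enlarged by $b-1$.
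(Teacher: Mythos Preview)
Your proposal is correct and follows exactly the paper's approach: the paper simply states ``Using the recurrence relation in Proposition~\ref{recb}, we have the following corollaries,'' so the intended proof is precisely the iteration of Proposition~\ref{recb} (equivalently, a single application of Corollary~\ref{recball}(1) with $b$ replaced by $b+1$) on top of Theorem~\ref{lamhwd}. Your only slip is the iteration count---to reach the stated parameters $(\lfloor 2\rho/T\rfloor+2,\rho+b,b+1)$ from $(\lfloor 2\rho/T\rfloor+2,\rho,1)$ you apply Proposition~\ref{recb} exactly $b$ times, not $b-1$---but you already flagged this indexing discrepancy yourself.
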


\begin{corollary}
    The Hamming weight distribution function $\Delta_T^H$ is a locally $\left(\left\lfloor\frac{2\rho}{T}\right\rfloor+2,\rho+b-1,b\right)$-function.
\end{corollary}

In the following theorem, we determine a value of $\lambda$ such that the $b$-symbol weight distribution function $\Delta_T^b$ is a locally $(\lambda, \rho, b)$-function. The proof follows a similar approach to that of \cite[Theorem 8]{Rajput2025}. Thus, we omit the proof. 
\begin{theorem}\label{bwdf}
    The $b$-symbol weight distribution function $\Delta_T^b$ is a locally  $\left(\left\lfloor\frac{2\rho}{T}\right\rfloor+2,\rho,b\right)$-function.
\end{theorem}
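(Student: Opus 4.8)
The plan is to mimic the structure of the proof of Theorem~\ref{lamhwd}, tracking the maximum and minimum values of $w_b$ that can occur inside a $b$-symbol function ball $B^b_{\Delta_T^b}(u,\rho)$, and then passing to the floor-quotient by $T$. Fix $u\in\mathbb{F}_2^k$ with $b$-symbol weight $w_b(u)=w$, and let $u'\in B^b_{\Delta_T^b}(u,\rho)$, so $d_b(u,u')\le\rho$. By Lemma~\ref{handb}, $d_b(u,u')\le\rho$ forces $d_H(u,u')\le\rho-(b-1)$ (since $d_H+(b-1)\le d_b$ whenever $0<d_H\le n-(b-1)$, and the case $d_H=0$ is trivial). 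So every $u'$ in the ball is obtained from $u$ by flipping at most $\rho-b+1$ coordinates. The first step is therefore to bound how much $w_b$ can change when a bounded number of Hamming coordinates are flipped: flipping a single coordinate $x_i$ affects only the $b$ cyclic $b$-windows that contain position $i$, so it changes $w_b$ by at most $b$ in absolute value. Hence flipping $d_H(u,u')=d\le\rho-b+1$ coordinates changes $w_b$ by at most $bd\le b(\rho-b+1)$, giving
\begin{equation*}
w_b(u')\in\bigl[\max(0,\,w-b(\rho-b+1)),\ \min(k,\,w+b(\rho-b+1))\bigr].
\end{equation*}

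The second step is to argue that, within this interval, \emph{every} integer value of $w_b$ is actually attained by some $u'$ in the ball (or at least enough of them that no better count than $2\rho/T$-ish is possible, and that the claimed bound is an honest upper bound). For the upper-bound direction all we need is that the set of attainable $b$-symbol weights is contained in an interval of length at most $2b(\rho-b+1)+1$; then applying $\lfloor\cdot/T\rfloor$ to an integer interval of length $L$ yields at most $\lfloor (L-1)/T\rfloor+1$ distinct values. With $L-1=2b(\rho-b+1)$ this gives $|B^b_{\Delta_T^b}(u,\rho)|\le\lfloor 2b(\rho-b+1)/T\rfloor+1$. However, the statement to be proved claims the sharper bound $\lfloor 2\rho/T\rfloor+2$, which is \emph{smaller} than $\lfloor 2b(\rho-b+1)/T\rfloor+1$ for $b\ge2$ — so the crude "flip-a-coordinate-changes-$w_b$-by-$b$" estimate is too lossy, and the proof must instead control $w_b(u')-w_b(u)$ directly in terms of $d_b(u,u')$ rather than going through $d_H$.

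The correct second step, then, is a $b$-symbol analogue of the identity $w_H(u')=w+a-b$ used in Theorems~\ref{lambda_w} and~\ref{lamhwd}: one shows that if $d_b(u,u')=\delta\le\rho$ then $w_b(u')=w_b(u)+\alpha-\beta$ where $\alpha,\beta\ge0$ and $\alpha+\beta\le\delta$ (here $\alpha$, $\beta$ count the $b$-windows that switch from weight-$0$ to positive and vice versa, or more precisely the positions of $\pi_b(u)$ versus $\pi_b(u')$ that change in each direction), so that $|w_b(u')-w_b(u)|\le d_b(u,u')\le\rho$. This confines the attainable $b$-symbol weights to $[\max(0,w-\rho),\min(k,w+\rho)]$, an integer interval of length at most $2\rho+1$, and then the floor-by-$T$ argument from Theorem~\ref{lamhwd} (three cases: $\rho\le w\le k-\rho$, $w<\rho$, $w>k-\rho$) gives $|B^b_{\Delta_T^b}(u,\rho)|\le\lfloor 2\rho/T\rfloor+2$ with tightness in the middle case. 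This is exactly why the paper can invoke "a similar approach to \cite[Theorem 8]{Rajput2025}."

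\textbf{Main obstacle.} The delicate point — and the step I expect to require the most care — is establishing $|w_b(u')-w_b(u)|\le d_b(u,u')$, i.e.\ that a change of $\delta$ in the $b$-symbol distance can move the $b$-symbol weight by at most $\delta$. This is the $b$-symbol replacement for the elementary Hamming fact $w_H(u')=w+a-b$, and it is not completely formal because $\pi_b$ is not a bijection and the $b$-windows overlap; one must phrase it in terms of the weights of the read vectors, $w_b(u)=w_H(\pi_b(u))$, and use that $d_b(u,u')=d_H(\pi_b(u),\pi_b(u'))$ together with the triangle-type bound $\bigl|w_H(\pi_b(u))-w_H(\pi_b(u'))\bigr|\le d_H(\pi_b(u),\pi_b(u'))$, which is immediate once everything is expressed at the level of read vectors. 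Lemma~\ref{tuplequal} may also be invoked to handle edge cases near the cyclic wrap-around. Once this bound is in hand, the rest is the routine three-case floor computation already carried out in Theorem~\ref{lamhwd}, so I would present the weight-perturbation bound carefully and then refer to the earlier proof for the remainder.
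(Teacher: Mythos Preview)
Your proposal is correct and follows essentially the same route the paper intends (it cites \cite[Theorem 8]{Rajput2025} and omits the proof precisely because the argument is the $b$-symbol transcription of Theorem~\ref{lamhwd}). The key step you isolate, $|w_b(u')-w_b(u)|\le d_b(u,u')$, is exactly right and — as you yourself observe — is immediate once one writes $w_b=w_H\circ\pi_b$ and $d_b(u,u')=d_H(\pi_b(u),\pi_b(u'))$ and applies the reverse triangle inequality for Hamming weight; so your ``main obstacle'' is not an obstacle at all, and the initial detour through $d_H$ and Lemma~\ref{handb} can be dropped entirely from a clean write-up (as can the reference to Lemma~\ref{tuplequal}, which plays no role). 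One small caveat: the theorem only asserts the upper bound $|B^b_{\Delta_T^b}(u,\rho)|\le\lfloor 2\rho/T\rfloor+2$, so you need not (and should not, without further work) claim tightness in the middle case for general $b$.
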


\begin{corollary}
    The $b$-symbol weight function $w_b$ is a locally $(4t+2,2t,b)$-function for $t\in \mathbb{N}$.
\end{corollary}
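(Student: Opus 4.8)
The plan is to obtain this as a direct specialization of Theorem~\ref{bwdf}. Recall that the $b$-symbol weight distribution function with threshold $T$ is $\Delta_T^b(u)=\lfloor w_b(u)/T\rfloor$, so that taking $T=1$ gives $\Delta_1^b(u)=\lfloor w_b(u)/1\rfloor = w_b(u)=\Delta^b(u)$; that is, the $b$-symbol weight function is precisely the $b$-symbol weight distribution function with threshold one. Thus the claim about $\Delta^b$ is the $T=1$ instance of the claim about $\Delta_T^b$.

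Concretely, I would apply Theorem~\ref{bwdf} with $T=1$ and radius $\rho=2t$. That theorem asserts that $\Delta_T^b$ is a locally $\bigl(\lfloor \tfrac{2\rho}{T}\rfloor+2,\rho,b\bigr)$-function, so substituting $T=1$ and $\rho=2t$ yields the parameter $\lfloor \tfrac{2\cdot 2t}{1}\rfloor+2 = 4t+2$, giving that $\Delta^b=\Delta_1^b$ is a locally $(4t+2,2t,b)$-function. Since $t\in\mathbb{N}$ is arbitrary, this is exactly the statement to be proved.

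For completeness one could also argue directly, mirroring the proof of Theorem~\ref{lambda_w}: for $u\in\mathbb{F}_2^k$ the set of $b$-symbol weights $\{w_b(u'): d_b(u,u')\le 2t\}$ lies in an interval of integers whose endpoints differ by at most $2t$ (using Lemma~\ref{handb} to relate $d_b$ and $d_H$ and hence to control how much $w_b$ can change), so the function ball $B^b_{\Delta^b}(u,2t)$ has at most $2t+1$\,--\,type size after accounting for the coarser $b$-symbol weight behaviour; however, the cited proof of Theorem~\ref{bwdf} already carries this out in the general threshold setting, so invoking it is the cleanest route.

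The only point that needs care is verifying the identification $\Delta_1^b=\Delta^b$ and the arithmetic $\lfloor 4t\rfloor+2=4t+2$; there is no real obstacle, since the substantive work is entirely contained in Theorem~\ref{bwdf}.
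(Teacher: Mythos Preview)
Your proposal is correct and matches the paper's approach: the corollary is placed immediately after Theorem~\ref{bwdf} with no separate proof, so the intended argument is exactly the specialization $T=1$, $\rho=2t$ that you carry out. (One minor slip in your optional ``for completeness'' sketch: the interval of attainable $b$-symbol weights has endpoints differing by at most $4t$, not $2t$, since $|w_b(u)-w_b(u')|\le d_b(u,u')\le 2t$ gives a range of length $4t$; but you explicitly do not rely on that sketch, so the proof stands.)
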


Determining the smallest $\lambda_s$ for which the weight distribution function is a locally $(\lambda,\rho,b)$-function presents a non-trivial problem. In the subsequent results, we establish both upper and lower bounds on $\lambda_s$. 

\begin{proposition}\label{ballbdd}
   Let  $\Delta_T^b$ be a locally $(\lambda_s,\rho,b)$-function for some $0\leq \rho \leq k$ and positive integer $T$. Then  $\lambda_s\geq\left\lfloor\frac{\rho} {T}\right\rfloor + 1$ for $t\in \mathbb{N}$.
\end{proposition}
\begin{proof}
    Note that $$|B_b(\mathbf{0},\rho)|=\sum_{i=0}^\rho|S^b(\mathbf{0},i)|,$$
    where $|S^b(\mathbf{0},i)|=|\{x\in \mathbb{F}_2^k|\ d_b(x,\mathbf{0})=w_b(x)=i\}|$ for $0\leq i\leq \rho$. Also, 
    $$\Delta_T^b(S^b(\mathbf{0},i))= \left \{\left\lfloor\frac{i}{T}\right\rfloor\right\}.$$ Thus $\max_{u\in \mathbb{F}_2^k} |B^b_{\Delta_T^b}(u,\rho)|\geq \left\lfloor \frac{\rho}{T}\right\rfloor + 1 $. This completes the proof.
\end{proof}
Combining Theorem \ref{bwdf} and Proposition \ref{ballbdd}, we have the following lower and upper bounds on $\lambda_s$.
\begin{corollary}\label{luboundwd}
    Let the $b$-symbol weight distribution function $\Delta^b_T$  be a locally $(\lambda_s,\rho,b)$-function. Then   $\left\lfloor\frac{\rho}{T}\right\rfloor + 1\leq \lambda_s\leq \left\lfloor\frac{2\rho}{T}\right\rfloor+2$.
\end{corollary}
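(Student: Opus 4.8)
The plan is to derive the claimed two-sided estimate on $\lambda_s$ by directly combining the upper bound supplied by Theorem~\ref{bwdf} with the lower bound supplied by Theorem~\ref{ballbdd}; no new machinery is needed, since both ingredients are already in hand.

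First I would invoke Theorem~\ref{bwdf}, which asserts that $\Delta_T^b$ is a locally $(\lfloor 2\rho/T\rfloor+2,\rho,b)$-function, i.e.\ $|B_{\Delta_T^b}^b(u,\rho)|\le \lfloor 2\rho/T\rfloor+2$ for every $u\in\mathbb{F}_2^k$. Since $\lambda_s$ is by definition the smallest $\lambda$ for which $\Delta_T^b$ is locally $(\lambda,\rho,b)$, and $\lfloor 2\rho/T\rfloor+2$ is one admissible such value, we obtain $\lambda_s\le \lfloor 2\rho/T\rfloor+2$ immediately. Next I would invoke Theorem~\ref{ballbdd}, which gives $\lambda_s\ge \lfloor\rho/T\rfloor+1$ for $0\le\rho\le k$; its argument exhibits a vector of $b$-symbol weight $i$ in $B^b(\mathbf 0,\rho)$ for each $0\le i\le\rho$, so that $B_{\Delta_T^b}^b(\mathbf 0,\rho)$ already contains the $\lfloor\rho/T\rfloor+1$ distinct values $\{\lfloor i/T\rfloor : 0\le i\le\rho\}$, forcing $\lambda_s\ge\lfloor\rho/T\rfloor+1$. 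Chaining the two inequalities yields $\lfloor\rho/T\rfloor+1\le\lambda_s\le\lfloor 2\rho/T\rfloor+2$.

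As the statement is a direct consequence of two previously established theorems, there is essentially no obstacle. The only point meriting a line of care is verifying that the hypotheses of both results are simultaneously in force for the same function $\Delta_T^b$ at the same radius $\rho$ — in particular that $0\le\rho\le k$, which is precisely the range assumed in Theorem~\ref{ballbdd} and under which Theorem~\ref{bwdf} is stated — so that the two bounds genuinely bracket the same quantity $\lambda_s$.
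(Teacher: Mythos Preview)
Your proposal is correct and follows precisely the paper's approach: the corollary is obtained by directly combining Theorem~\ref{bwdf} (upper bound) and Theorem~\ref{ballbdd} (lower bound). No additional argument beyond citing these two results is given in the paper.
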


\begin{corollary}\label{luboundw}
    Let $b$-symbol weight function be a locally $(\lambda_s,\rho,b)$-function. Then $\rho+1\leq \lambda_s\leq 2\rho+2$. 
\end{corollary}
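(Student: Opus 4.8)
\section*{Proof proposal}

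The plan is to observe that the $b$-symbol weight function $\Delta^b$ is precisely the $b$-symbol weight distribution function $\Delta_T^b$ in the degenerate case $T=1$, since $\Delta_1^b(u)=\lfloor w_b(u)/1\rfloor=w_b(u)$. Hence the statement is an immediate specialization of the preceding corollary (the one bounding $\lambda_s$ for general $\Delta_T^b$), and no new argument is needed.

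Concretely, I would first recall that by Theorem~\ref{bwdf} with $T=1$, the function $\Delta^b=\Delta_1^b$ is a locally $(\lfloor 2\rho/1\rfloor+2,\rho,b)=(2\rho+2,\rho,b)$-function, which yields the upper bound $\lambda_s\le 2\rho+2$. Then, applying Theorem~\ref{ballbdd} with $T=1$, we get $\lambda_s\ge \lfloor \rho/1\rfloor+1=\rho+1$, which is the lower bound. Combining the two inequalities gives $\rho+1\le\lambda_s\le 2\rho+2$.

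There is essentially no obstacle here: the only thing to check is the harmless arithmetic $\lfloor \rho/1\rfloor=\rho$ and $\lfloor 2\rho/1\rfloor=2\rho$, together with the trivial identification of $\Delta^b$ with $\Delta_1^b$. If one wished to avoid even invoking the general-$T$ corollary, the same two-line reasoning can be lifted directly: the upper bound comes from the function ball of $\Delta^b$ being contained in an interval of integer $b$-symbol weights of length at most $2\rho+1$ (plus the slack introduced by passing to $b$-symbol weights, which accounts for the extra $+1$), and the lower bound comes from exhibiting, inside $B^b(\mathbf 0,\rho)$, vectors of every $b$-symbol weight $0,1,\dots,\rho$ via the spheres $S^b(\mathbf 0,i)$, giving at least $\rho+1$ distinct function values. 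Thus the corollary follows directly.
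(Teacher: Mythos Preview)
Your proposal is correct and matches the paper's approach exactly: the corollary is stated immediately after the general-$T$ bound $\lfloor\rho/T\rfloor+1\le\lambda_s\le\lfloor 2\rho/T\rfloor+2$ (itself obtained by combining Theorems~\ref{bwdf} and~\ref{ballbdd}), and is simply its $T=1$ specialization. No separate argument is given or needed in the paper.
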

In Table \ref{tab1}, we determine the exact value of $\lambda_s$ using MAGMA that achieves the upper bound, as well as intermediate values, for various values of $k, b,\rho$, and $T$.   
\begin{table}[h!]
    \centering
    \begin{tabular}{|c|c|c|c|c|c|}
    \hline
     $k$    &$b$& $\rho$& $T$ & $\left \lfloor\frac{\rho}{T}\right\rfloor + 1\leq \lambda_s\leq \left \lfloor\frac{2\rho}{T}\right \rfloor+2$ & \text{Exact} $\lambda_s$  \\
     \hline
    $6$ & $2$ & $2$ & $3$ & $2\leq \lambda_s\leq 3$ & $3$\\
     \hline
      $6$ & $2$ & $2$ & $1$ & $3\leq \lambda_s\leq 6$ & $4$\\
     \hline
     $8$ & $2$ & $4$ & $2$ & $3\leq \lambda_s\leq 6$ & $5$\\
     \hline
      $8$ & $3$ & $4$ & $3$ & $2\leq \lambda_s\leq 4$ & $3$\\
     \hline
      $9$ & $3$ & $3$ & $7$ & $1\leq \lambda_s\leq 2$ & $2$\\
     \hline
      $10$ & $3$ & $3$ & $6$ & $1\leq \lambda_s\leq 3$ & $2$\\
     \hline
    \end{tabular}
    \caption{Locality of $b$-symbol weight distribution function over $\mathbb{F}_2^k$.}
    \label{tab1}
\end{table}

\section{Redundancy of FCBSC for locally $(\lambda,\rho,b)$-functions}\label{sectionredundancy}

Rajput et al.~\cite{Rajput2025} derived several bounds on the optimal redundancy of $(f,t)$-FCCs corresponding to locally $(\lambda,\rho)$-functions with respect to the Hamming distance. They further demonstrated that the optimal redundancy is achievable for locally $(4,2t)$-functions under certain conditions. In this section, we derive a recurrence relation connecting the optimal redundancy for $b$ and $(b+1)$-symbol read channels. Furthermore, we establish upper bounds on the optimal redundancy of FCBSCs for locally bounded functions over $b$-symbol read channels. These findings generalize several of the bounds established in \cite{Rajput2025} to the broader setting of the \(b\)-symbol distance model. 

\begin{lemma}\label{bound_lambdat}
    Let $N_H(\lambda, 2t)$ be the minimum possible length of a binary error-correcting code with $\lambda$ codewords and minimum Hamming distance $2t$. Then,  $N_H (\lambda, 2t) \leq \lambda t$.
\end{lemma}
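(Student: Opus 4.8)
The statement is an upper bound on a \emph{minimal} length, so the natural strategy is to exhibit an explicit binary code with exactly $\lambda$ codewords, length $\lambda t$, and minimum Hamming distance at least $2t$; the existence of such a code immediately yields $N_H(\lambda,2t)\le \lambda t$. The plan is to use a ``block-indicator'' (disjoint-support) construction.

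Concretely, I would set $n=\lambda t$ and partition the coordinate set $\{0,1,\dots,\lambda t-1\}$ into $\lambda$ consecutive blocks $I_0,I_1,\dots,I_{\lambda-1}$, each of size $t$. For $i\in\{0,1,\dots,\lambda-1\}$ define $c_i\in\mathbb{F}_2^{\lambda t}$ to be the indicator vector of the block $I_i$, i.e.\ $c_i$ has a $1$ in every coordinate of $I_i$ and a $0$ elsewhere, and let $\mathcal{C}=\{c_0,\dots,c_{\lambda-1}\}$. Then $|\mathcal{C}|=\lambda$ by construction. For distinct $i,j$, the supports $I_i$ and $I_j$ are disjoint, so $c_i$ and $c_j$ disagree precisely on $I_i\cup I_j$, giving $d_H(c_i,c_j)=|I_i|+|I_j|=2t$. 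Hence $d_H(\mathcal{C})=2t$, and $\mathcal{C}$ is a binary code of length $\lambda t$ with $\lambda$ codewords and minimum distance $2t$, which establishes $N_H(\lambda,2t)\le \lambda t$.

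There is essentially no serious obstacle here: the only points requiring a moment's care are the trivial edge cases (for $\lambda=1$ any length works and the bound reads $N_H(1,2t)\le t$, which is consistent since a one-codeword code has no distance constraint; for $t=0$ the bound is vacuous), and confirming that the claimed minimum distance is actually attained rather than merely lower-bounded, which the disjoint-support computation above does. If a sharper statement were desired one could try to beat $\lambda t$ with a cleverer packing, but for the stated bound the indicator construction suffices.
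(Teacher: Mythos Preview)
Your proposal is correct and essentially identical to the paper's proof: the paper also constructs $\mathcal{C}=\{c_1,\dots,c_\lambda\}\subseteq\mathbb{F}_2^{\lambda t}$ with $c_i$ consisting of $(i-1)t$ zeros, then $t$ ones, then $(\lambda-i)t$ zeros, and notes that this is a $(\lambda t,\lambda,2t)$ code. Your disjoint-support computation makes the minimum-distance verification a bit more explicit, but the construction and argument are the same.
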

\begin{proof}
     Let $\mathcal{C} = \{c_1,c_2,\ldots,c_\lambda\}$, where $c_i$ is a vector of length $\lambda t$ consisting of $(i-1)t \text{ } 0s$ followed by $t \text{  } 1s$ followed by $(\lambda - i)t$ $0s$. Then clearly $\mathcal{C}$ is a binary error-correcting $(\lambda t, \lambda, 2t)$ code. Therefore, $ N_H (\lambda, 2t) \leq \lambda t.$
\end{proof}
Using Theorem \ref{thm7frm11} and Lemma \ref{bound_lambdat}, we derive an upper bound on the optimal redundancy of an $(f,t)$-FCBSC for locally $(\lambda,2t,b=1)$-function $f$, as stated in the following theorem. 

    \begin{theorem}\label{le_bound_lambda}
    Let $t$ be a positive integer. For any locally $(\lambda, 2t, b=1)$-bounded function $f$, the optimal redundancy of an $(f, t)$-FCBSC is bounded above by
    \begin{equation*}
    r_f^H(k, t) \leq \lambda t.
    \end{equation*}
    \end{theorem}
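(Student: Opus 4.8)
The plan is to combine the two ingredients that the statement explicitly points to: the redundancy bound for locally $(\lambda,\rho)$-functions in the Hamming metric from \cite[Theorem 7]{Rajput2025}, and the elementary upper bound $N_H(\lambda,2t)\le \lambda t$ established in Lemma \ref{bound_lambdat}. Since $b=1$, the $b$-symbol distance is just the Hamming distance, so an $(f,t)$-FCBSC is exactly an $(f,t)$-FCC in the classical sense, and $r_f^H(k,t)$ coincides with the optimal FCC redundancy for $f$. Thus the whole statement is purely a Hamming-metric fact.

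The key step is to recall the precise form of \cite[Theorem 7]{Rajput2025}: for a locally $(\lambda,2t)$-function $f$, the optimal redundancy of an $(f,t)$-FCC is bounded above by the minimum length $N_H(\lambda,2t)$ of a binary code with $\lambda$ codewords and minimum Hamming distance $2t$. The idea behind that bound is that within any Hamming ball of radius $2t$ around a message, $f$ takes at most $\lambda$ distinct values; one can then attach to these $\lambda$ values the $\lambda$ codewords of a length-$N_H(\lambda,2t)$ code, so that any two messages whose encodings could be confused within $t$ errors but whose $f$-values differ receive redundancy vectors at Hamming distance $\ge 2t$, which suffices for $t$-error correction of $f$. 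Taking that theorem as given, I would simply write
\begin{equation*}
r_f^H(k,t)\le N_H(\lambda,2t)\le \lambda t,
\end{equation*}
where the first inequality is \cite[Theorem 7]{Rajput2025} and the second is Lemma \ref{bound_lambdat}.

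The one point that needs a line of care is checking that the hypotheses of \cite[Theorem 7]{Rajput2025} are met here: we need $\rho=2t$, which is exactly the locality radius assumed in the statement, and we need the code construction in Lemma \ref{bound_lambdat} to actually realize a valid $\boldsymbol B$-code / ECC with the right parameters — which it does, since the concatenated indicator vectors $c_i$ (blocks of $t$ ones in disjoint positions) pairwise differ in exactly $2t$ coordinates. So the proof is essentially a two-line chain of inequalities once the cited theorem is invoked correctly.

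I do not anticipate a genuine obstacle here, since both constituent results are already available in the excerpt; the only mild subtlety is bookkeeping — making sure the locality parameter $\rho=2t$ in the hypothesis of \cite[Theorem 7]{Rajput2025} matches the $2t$ appearing as the minimum distance in $N_H(\lambda,2t)$, and confirming that the $b=1$ specialization collapses the $b$-symbol notions to their Hamming counterparts so that the cited Hamming-metric theorem applies verbatim. Everything else is the trivial substitution of Lemma \ref{bound_lambdat}.
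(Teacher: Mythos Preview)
Your proposal is correct and matches the paper's approach exactly: the paper does not even write out a separate proof, but simply states that the theorem follows by combining \cite[Theorem 7]{Rajput2025} (giving $r_f^H(k,t)\le N_H(\lambda,2t)$) with Lemma~\ref{bound_lambdat} (giving $N_H(\lambda,2t)\le \lambda t$). Your bookkeeping remarks about $\rho=2t$ and the $b=1$ specialization are accurate and complete the argument.
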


Next, we show that the above bound is achievable. We provide an example demonstrating the achievability conditions.
    \begin{theorem}\label{optimality_condition}
    Let $\lambda=3$ and $t>0$ be an integer. Then, for any locally $(3,2t,b=1)$-function, the optimal redundancy of an $(f,t)$-FCBSC satisfies $r_f^H(k,t)\leq 3t$. Moreover, for a locally $(3, 2t,b=1)$-function $f$ with $|\text{Im}(f)| \geq 3$, if there exist $u_1, u_2, u_3 \in \mathbb{F}_2^k$ with $f(u_1) \neq f(u_2) \neq f(u_3)\neq f(u_1)$ such that 
    $d_H(u_1, u_2) =1, d_H(u_3, u_1)=1 \ \text{and} \ d_H(u_3, u_2)=2,$
    then $r^H_f(k, t)=3t$. 
    \end{theorem}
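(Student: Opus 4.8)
The plan is to prove the two halves of the statement separately. The upper bound $r_f^H(k,t) \leq 3t$ for a locally $(3,2t,b=1)$-function is immediate: it is the $\lambda = 3$ instance of Theorem \ref{le_bound_lambda}, which in turn combines \cite[Theorem 7]{Rajput2025} with Lemma \ref{bound_lambdat}. So the substance is the lower bound $r_f^H(k,t) \geq 3t$ under the stated configuration hypothesis, which together with the upper bound forces equality.

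For the lower bound, I would argue by contradiction: suppose there is an encoding $\mathrm{Enc}(x) = (x, p(x))$ with redundancy $r = |p(x)| \leq 3t - 1$ that corrects $t$ $b$-symbol errors for $f$ (here $b=1$, so $d_b = d_H$). Restrict attention to the three messages $u_1, u_2, u_3$ with pairwise distinct function values. By the FCC defining property, the three encoded words $\mathrm{Enc}(u_i) = (u_i, p(u_i))$ must be pairwise at Hamming distance $\geq 2t+1$. Now split each distance as $d_H(\mathrm{Enc}(u_i), \mathrm{Enc}(u_j)) = d_H(u_i, u_j) + d_H(p(u_i), p(u_j))$. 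Using the hypotheses $d_H(u_1,u_2) = 1$, $d_H(u_1,u_3) = 1$, $d_H(u_2,u_3) = 2$, this yields the three inequalities
\begin{equation*}
d_H(p(u_1), p(u_2)) \geq 2t, \quad d_H(p(u_1), p(u_3)) \geq 2t, \quad d_H(p(u_2), p(u_3)) \geq 2t - 1.
\end{equation*}
The goal is to show that three binary vectors with these pairwise distances must have length at least $3t$, contradicting $r \leq 3t - 1$.

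The key combinatorial step is a parity/counting argument on the three parity vectors. Consider the coordinates of $p(u_1), p(u_2), p(u_3)$ partitioned by the pattern of agreements among the three vectors. Let $n_{12}, n_{13}, n_{23}$ denote the number of coordinates where exactly the indicated pair agrees (and the third differs), and $n_{\emptyset}$ the number of coordinates where all three differ is impossible over $\mathbb{F}_2$ — over the binary field, among three bits at least two must be equal, so every coordinate contributes to exactly one of "all three equal" or "exactly one pair differs." Writing $a$ for the count of all-equal coordinates and $d_{ij}$ for the count of coordinates where $p(u_i) \neq p(u_j)$, one gets $d_{12} = n_{13} + n_{23}$, $d_{13} = n_{12} + n_{23}$, $d_{23} = n_{12} + n_{13}$, hence $d_{12} + d_{13} + d_{23} = 2(n_{12} + n_{13} + n_{23}) \leq 2r$. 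Plugging in the lower bounds gives $2r \geq d_{12} + d_{13} + d_{23} \geq 2t + 2t + (2t-1) = 6t - 1$, so $r \geq 3t - 1/2$, i.e.\ $r \geq 3t$ since $r$ is an integer. This is the desired contradiction with $r \leq 3t-1$, and combined with the upper bound gives $r_f^H(k,t) = 3t$.

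The main obstacle is making the coordinate-partition counting rigorous and confirming that the parity constraint $d_{12} + d_{13} + d_{23}$ being even is exactly what upgrades $6t - 1$ to force $r \geq 3t$ rather than merely $r \geq \lceil (6t-1)/2 \rceil$ — one must check that $6t-1$ is odd so the ceiling is indeed $3t$, which is automatic. A secondary point to handle carefully is the role of $|Im(f)| \geq 3$: it is needed only to guarantee the three function values $f(u_1), f(u_2), f(u_3)$ can genuinely be pairwise distinct, so the three FCC constraints are all active; I would note this explicitly. Finally, I would include the promised illustrative example — e.g.\ a small explicit $f$ on $\mathbb{F}_2^k$ realizing the configuration $d_H(u_1,u_2)=d_H(u_1,u_3)=1$, $d_H(u_2,u_3)=2$ (take $u_1 = 00\cdots0$, $u_2 = 10\cdots0$, $u_3 = 01 0\cdots 0$) and exhibit an optimal length-$3t$ parity assignment matching the bound, confirming tightness.
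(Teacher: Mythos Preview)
Your proposal is correct. Both the upper bound (via Theorem~\ref{le_bound_lambda} with $\lambda=3$) and the lower bound are handled soundly, and your coordinate-type partition argument for three binary vectors is valid: over $\mathbb{F}_2$ every coordinate has at least two of the three parity symbols equal, whence $d_{12}+d_{13}+d_{23}=2(n_{12}+n_{13}+n_{23})\le 2r$, giving $r\ge \lceil(6t-1)/2\rceil=3t$.

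The paper reaches the same conclusion by a different packaging. Rather than arguing directly on the parity vectors, it forms the $3\times 3$ distance-requirement matrix $\mathcal{D}_f(t,u_1,u_2,u_3)$ with off-diagonal entries $2t,2t,2t-1$, invokes the generalized Plotkin bound \cite[Lemma~1]{Lenz2023} to get $N(\mathcal{D}_f)\ge \tfrac{4}{3^2-1}(6t-1)=3t-\tfrac12$, and then uses \cite[Corollary~1]{Lenz2023} to pass from $N(\mathcal{D}_f)$ to $r_f^H(k,t)$. Your argument is exactly the three-point Plotkin inequality proved from scratch, so the mathematical content is identical; what differs is that the paper routes through the irregular-distance-code framework and external citations, while you give a self-contained elementary derivation. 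Your version has the advantage of not depending on \cite{Lenz2023}; the paper's version has the advantage of situating the bound inside the general $N(\boldsymbol{B})$ machinery used elsewhere in the paper. Your observation about the role of $|\mathrm{Im}(f)|\ge 3$ (ensuring all three pairwise constraints are active) is a useful clarification that the paper leaves implicit.
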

    \begin{proof}
    By Theorem \ref{le_bound_lambda}, we have $r_f^H(k, t) \leq 3t$. For $u_1, u_2, u_3 \in \mathbb{F}_2^k$, the distance requirement matrix is
    $$\mathcal{D}_f(t, u_1, u_2, u_3) = \left[\begin{matrix} 0 & 2t & 2t \\
    2t & 0& 2t-1 \\
    2t &2t-1 & 0  
    \end{matrix}\right].$$
    From the generalized Plotkin bound  in Lemma \ref{lemma1frm7}, we have 
    \begin{align*}
    N(\mathcal{D}_f(t, u_1, u_2, u_3)) &\geq \frac{4}{3^2-1} (D_{12}  + D_{13} + D_{23}) \\
    & = \frac{1}{2} (6t-1) = 3t-\frac{1}{2}.
    \end{align*}
    Since $N(\mathcal{D}_f(t, u_1, u_2, u_3))$ is an integer, by \cite[Corollary 1]{Lenz2023}, we have 
    $$r_f^H(k,t) \geq N(\mathcal{D}_f(t, u_1, u_2, u_3)) \geq 3t.$$ 
    Hence $r_f^H(k,t)=3t$.
    \end{proof}
    
    The following example shows that the assumptions in Theorem \ref{optimality_condition} are not absurd.
    \begin{example}
    Consider the Hamming weight function $ w_H: \mathbb{F}_2^2 \rightarrow \{0,1,2\} $. Let $ u_1 = (1,0) ,  u_2 = (0,0) , \text{ and } u_3 = (1,1) $.  Observe that the function values are
    \[
    w_H(u_1) = 1, \quad w_H(u_2) = 0, \quad w_H(u_3) = 2.
    \]
    Additionally, the Hamming distances between these vectors are
    \[
    d_H(u_1, u_2) = 1, \quad d_H(u_1, u_3) = 1, \quad d_H(u_2, u_3) = 2.
    \]
    Hence, \( w_H \) satisfies the required conditions.
    \end{example}
    
    \begin{remark}
        Theorem 4.3 shows that the bound established in Theorem 4.2 is tight for locally $(3, 2t, b = 1)$-functions under certain conditions. Moreover, the upper bound is also found to be tight for locally binary functions in \cite[Lemma 5]{Lenz2023}.
    \end{remark} 
    
    In the following theorem, we present a recurrence relation that connects the optimal redundancy of an  $(f,t)$-FCBSC for $b$-symbol and $(b+1)$-symbol read channels.  This result provides a systematic framework for estimating the optimal redundancy required as the symbol-read length increases.
    \begin{proposition}
        Let $f:\mathbb{F}_2^k\to \text{Im}(f)$. Then for any $(f,t)$-FCBSC, we have 
        $$r_f^{b+1}(k,t)\leq r_f^b(k,t),$$
        where $r_f^{b+1}(k,t)$ and  $r_f^b(k,t)$ are the optimal redundancy of an $(f,t)$-FCBSC for $(b+1)$-read  and $b$-read channels, respectively.    
    \end{proposition}
    \begin{proof}
    Let $r_f^b(k,t)=r$ be the optimal redundancy of an $(f,t)$-FCBSC. Suppose the corresponding encoding is given by $Enc(u)=(u,u_p)$, where $u_p$ is the redundancy bits in $\mathbb{F}_2^r$. For $f(u)\neq f(v)$, we have $d_b(Enc(u),Enc(v))\geq 2t+1$. Also,
    $$d_{b+1}(Enc(u),Enc(v))\geq d_b(Enc(u),Enc(v))+1\geq 2t+1.$$
    Thus $r_f^{b+1}(k,t)\leq r_f^b(k,t)$.
    \end{proof}

\begin{lemma}\cite[Lemma 1]{Rajput2025}\label{lemma1frm11}
Let $b=1$ and $f$ is a locally $(\lambda,\rho,b)$-function. Then there exists a function $\tau_f:\mathbb{F}_2^k\to \text{Im}(f)$ such that for any $u,v\in \mathbb{F}_2^k$, if $f(u)\neq f(v)$ and $d_H(u,v)\leq \rho$,  then $\tau_f(u)\neq \tau_f(v)$.   
\end{lemma}
The following lemma plays a crucial role while constructing redundancy bits. The proof follows by employing a vertex coloring approach on an appropriately defined graph, analogous to the technique utilized in Lemma \ref{lemma1frm11} for the Hamming distance setting.

\begin{lemma}\label{tau}
Let $f:\mathbb{F}_2^k\to \text{Im}(f)$ be a locally $(\lambda,\rho,b)$-function. Then there exists a map $\tau_f^b:\mathbb{F}_2^k\to [\lambda]$ such that for all $u,v\in\mathbb{F}_2^k$ with $f(u)\neq f(v)$ and $d_ b(u,v)\leq \rho$, we have $\tau_f^b(u)\neq \tau_f^b(v)$.
\end{lemma}
\begin{lemma}\label{lambda4}
 Let $t>0$ and $b\geq 1$ be integers and $f:\mathbb{F}_2^k\to \text{Im}(f)$ be a locally $(4,2t,b)$-function. Then for an $(f,t)$-FCBSC, the optimal redundancy satisfies
 $$r_f^b(k,t)\leq 3t-b+1.$$
\end{lemma}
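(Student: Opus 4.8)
The strategy is to combine the coloring map $\tau_f^b$ from Lemma~\ref{tau} with an explicit choice of redundancy symbols tailored to a locally $(4,2t,b)$-function, and then verify the required $b$-symbol distance separation between encodings of messages with distinct function values. First I would apply Lemma~\ref{tau} with $\lambda = 4$ to obtain a map $\tau_f^b : \mathbb{F}_2^k \to [4]$ with the property that $\tau_f^b(u) \neq \tau_f^b(v)$ whenever $u \in B_f^b(v,2t)$ or $v \in B_f^b(u,2t)$; equivalently, whenever $f(u) \neq f(v)$ and $d_b(u,v) \leq 2t$, the two messages receive different colors. The plan is then to encode the color $\tau_f^b(u) \in \{1,2,3,4\}$ into a redundancy block of length $3t - b + 1$ using a small constant-weight-type binary code $\mathcal{E} = \{e_1, e_2, e_3, e_4\} \subseteq \mathbb{F}_2^{3t-b+1}$ chosen so that its minimum $b$-symbol distance is large enough; concretely one wants $d_b(e_i, e_j) \geq 2t - b + 2$ for $i \neq j$, which is exactly an $N_b(4, 2t-b+1)$-type code, and by Lemma~\ref{nbnh} (together with the fact that $N_H(4, 2t-b+1) \leq 3t-b+1$, since four codewords of minimum Hamming distance $2t-b+1$ fit in length $3(2t-b+1)/2 \le 3t-b+1$ for appropriate parity) such a code of length $3t-b+1$ exists. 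I would set $\mathrm{Enc}(u) = (u, e_{\tau_f^b(u)})$.

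Next I would verify the distance requirement. Take $u, v \in \mathbb{F}_2^k$ with $f(u) \neq f(v)$. There are two cases. If $d_b(u,v) \geq 2t+1$, then already $d_b(\mathrm{Enc}(u), \mathrm{Enc}(v)) \geq d_b(u,v) \geq 2t+1$ because appending coordinates only increases $b$-symbol distance (this monotonicity needs a short argument, but it is the same as used implicitly in the FCBSC literature and follows from Lemma~\ref{tuplequal}-type splitting). If instead $d_b(u,v) \leq 2t$, then $u \in B_f^b(v, 2t)$, so by the coloring property $\tau_f^b(u) \neq \tau_f^b(v)$, hence the redundancy blocks $e_{\tau_f^b(u)}$ and $e_{\tau_f^b(v)}$ are distinct codewords of $\mathcal{E}$. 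Then, using a concatenation/splitting lemma for $b$-symbol weights (Lemma~\ref{tuplequal} applied to the boundary between the message part and the redundancy part, or Lemma~\ref{handb} to bound the contribution of each block), one gets
\[
d_b(\mathrm{Enc}(u),\mathrm{Enc}(v)) \;\geq\; d_b(u,v) + d_b(e_{\tau_f^b(u)}, e_{\tau_f^b(v)}) - (b-1) \;\geq\; d_b(u,v) + (2t-b+2) - (b-1),
\]
and one must check this is at least $2t+1$. Since $d_b(u,v)\ge 1$ (the messages differ, as $f(u)\ne f(v)$ forces $u\ne v$), and more carefully $d_b(u,v) \ge b$ in the generic case or one handles the small-distance regime separately, the inequality closes. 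The careful bookkeeping of the $-(b-1)$ overlap terms at the junction of the two blocks is where the exact constant $3t-b+1$ is pinned down, and this is the step I expect to be the main obstacle: one has to be precise about how $b$-symbol windows straddle the boundary between $u$ and its redundancy, and ensure the worst case still yields $\ge 2t+1$.

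Therefore an $(f,t)$-FCBSC with redundancy $3t-b+1$ exists, giving $r_f^b(k,t) \leq 3t - b + 1$. The main technical care is in choosing the four redundancy codewords with the right $b$-symbol separation in length exactly $3t-b+1$ (invoking Lemma~\ref{nbnh} and the explicit $N_H(4, 2t-b+1) \le 3t-b+1$ construction analogous to Lemma~\ref{bound_lambdat}), and in the junction analysis for $d_b$ of concatenated vectors. For $b=1$ this recovers the known bound $r_f^H(k,t) \le 3t$ for locally $(4,2t)$-functions from \cite{Rajput2025}, which is a useful consistency check on the constant.
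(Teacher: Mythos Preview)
Your overall plan---apply Lemma~\ref{tau} to obtain a $4$-coloring, attach a redundancy codeword indexed by the color, and split into the cases $d_b(u,v)>2t$ versus $d_b(u,v)\le 2t$---matches the paper exactly. The gap is in the Case~2 arithmetic. With the concatenation bound $d_b(\mathrm{Enc}(u),\mathrm{Enc}(v))\ge d_b(u,v)+d_b(e_i,e_j)-(b-1)$ and your stated threshold $d_b(e_i,e_j)\ge 2t-b+2$, you obtain only $d_b(u,v)+2t-2b+3$; since the smallest nonzero $b$-symbol distance is $b$ (not $2b-2$), this falls short of $2t+1$ as soon as $b\ge 3$. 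To make the concatenation route close you would actually need $d_b(e_i,e_j)\ge 2t$, i.e.\ $N_b(4,2t)\le 3t-b+1$, but Lemma~\ref{nbnh} requires $\lambda>2^{b-1}$ and hence does not apply for $b\ge 3$ either. (There is also an internal mismatch: an ``$N_b(4,2t-b+1)$-type code'' guarantees $d_b\ge 2t-b+1$, not $2t-b+2$.)

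The paper sidesteps both issues by working in the \emph{Hamming} metric on the redundancy block. It takes the four words $000,110,101,011$, repeats each $t$ times to get vectors in $\mathbb{F}_2^{3t}$ with pairwise Hamming distance exactly $2t$, and then drops the last $b-1$ coordinates; the resulting redundancy vectors $u_p'\in\mathbb{F}_2^{3t-b+1}$ satisfy $d_H(u_p',v_p')\ge 2t-(b-1)$. In Case~2 one then applies Lemma~\ref{handb} once to the \emph{entire} encoded word:
\[
d_b(\mathrm{Enc}(u),\mathrm{Enc}(v))\;\ge\; d_H(u,v)+d_H(u_p',v_p')+(b-1)\;\ge\; 1+(2t-b+1)+(b-1)\;=\;2t+1,
\]
using only $d_H(u,v)\ge 1$. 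This single global Hamming-to-$b$-symbol conversion avoids the $(b-1)$ loss at the block junction that your concatenation inequality incurs, and is precisely what makes the constant $3t-b+1$ work for all $b$.
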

\begin{proof}
 By Lemma \ref{tau}, there exists a map $\tau_f^b:\mathbb{F}_2^k\to [4]$ such that $\tau_f^b(u)\neq\tau_f^b(v)$ for all $u,v\in \mathbb{F}_2^k$ whenever $f(u)\neq f(v)$ and $d_b(u,v)\leq 2t$. Define 
$$u_p= \begin{cases}
000\hspace{1 cm}\text{ if } \hspace{0.5 cm}\tau_f^b(u)=1\\
110\hspace{1 cm}\text{ if } \hspace{0.5 cm}\tau_f^b(u)=2\\
101\hspace{1 cm}\text{ if } \hspace{0.5 cm}\tau_f^b(u)=3\\
011\hspace{1 cm}\text{ if } \hspace{0.5 cm}\tau_f^b(u)=4.
 \end{cases}$$
Write $(u_p)^t=(u_p',u_p'')$, where $(u_p)^t\in \mathbb{F}_2^{3t}$ is the $t$ fold repetition of $u_p$, $u_p'\in \mathbb{F}_2^{3t-b+1}$  is the first $(3t-b+1)$ coordinates of $(u_p)^t$, and $u_p''\in \mathbb{F}_2^{b-1}$ is the last $(b-1)$ coordinates of $(u_p)^t$. Define an encoding map $\mathcal{E}:\mathbb{F}_2^k\to \mathbb{F}_2^{k+(3t-b+1)}$ as $\mathcal{E}(u)=(u,u_p')$.
Observe that, for $u,v\in \mathbb{F}_2^k$, we have
\begin{align*}
    d_H(u_p^t,v_p^t)=d_H(u_p',v_p')+d_H(u_p'',v_p'').
\end{align*}
Also, $d_H(u_p^t,v_p^t)=2t$ and $d_H(u_p'',v_p'')\leq b-1$. Thus,
\begin{equation}\label{eq2t-b+1}
    d_H(u_p',v_p')=d_H(u_p^t,v_p^t)-d_H(u_p'',v_p'')\geq 2t-(b-1).
\end{equation}
Now, we show that the above encoding defines an $(f,t)$-FCBSC with redundancy $r_f^b(k,t)=3t-b+1$ for $b$-symbol read channels. Let $u,v\in \mathbb{F}_2^k$ with $f(u)\neq f(v)$. \\
\textbf{Case 1}: If $d_b(u,v)\geq 2t+1$, then 
$$d_b(\mathcal{E}(u),\mathcal{E}(v))\geq d_b(u,v)\geq 2t+1.$$
\textbf{Case 2}: If $d_b(u,v)\leq 2t$, then $\tau_f^b(u)\neq \tau_f^b(v)$. Thus, $u_p'\neq v_p'$.
If $d_H(\mathcal{E}(u),\mathcal{E}(v))> k+(3t-b+1)-(b-1)$, then by Lemma \ref{handb},
\begin{align*}
    \begin{split}
d_b(\mathcal{E}(u),\mathcal{E}(v))=k+3t-b+1\geq 2t+1\hspace{0.5 cm}  \text{ (since $k\geq b$) }.  
    \end{split}
\end{align*}
Otherwise, 
\begin{align*}
    \begin{split}
d_b(\mathcal{E}(u),\mathcal{E}(v))&\geq d_H(\mathcal{E}(u),\mathcal{E}(v))+(b-1)  \\
  &=d_H(u,v)+d_H(u_p',v_p')+(b-1)\\
  &\geq 2t+1 \hspace{0.5 cm} ( \text{by Eq.} \ref{eq2t-b+1}).
    \end{split}
\end{align*}
This completes the proof.
\end{proof}
\begin{corollary}\cite[Lemma 2]{Rajput2025}
 Let $t$ be a positive integer and $f:\mathbb{F}_2^k\to \text{Im}(f)$ be a locally $(4,2t,b=1)$-function. Then the optimal redundancy of an $(f,t)$-FCC satisfies
 $r_f^H(k,t)\leq 3t$.
\end{corollary}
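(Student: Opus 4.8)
The statement to be proved is the corollary that specializes Lemma~\ref{lambda4} to the Hamming-metric case $b=1$: a locally $(4,2t,b=1)$-function $f:\mathbb{F}_2^k\to\operatorname{Im}(f)$ admits an $(f,t)$-FCC with redundancy $r_f(k,t)\le 3t$. The plan is simply to invoke Lemma~\ref{lambda4} with $b=1$. Since $r_f^b(k,t)\le 3t-b+1$ there, substituting $b=1$ gives $r_f^1(k,t)\le 3t-1+1=3t$, and $r_f^1(k,t)$ is exactly the optimal redundancy $r_f(k,t)$ of an ordinary $(f,t)$-FCC because, by the remarks in the Preliminaries, for $b=1$ the $b$-symbol distance coincides with the Hamming distance and FCBSC reduces to FCC. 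So the corollary is immediate.

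For completeness I would also spell out what the construction in Lemma~\ref{lambda4} becomes when $b=1$, so that the reader sees nothing degenerates. With $b=1$ the splitting $(u_p)^t=(u_p',u_p'')$ has $u_p''\in\mathbb{F}_2^{0}$ empty and $u_p'=(u_p)^t\in\mathbb{F}_2^{3t}$, so the encoding is $\mathcal{E}(u)=(u,(u_p)^t)$ where $u_p\in\{000,110,101,011\}$ is chosen according to the colour $\tau_f^1(u)\in[4]$ supplied by Lemma~\ref{tau}. The four strings $000,110,101,011$ are pairwise at Hamming distance $2$, so $d_H((u_p)^t,(v_p)^t)=2t$ whenever $\tau_f^1(u)\ne\tau_f^1(v)$; Eq.~\eqref{eq2t-b+1} reads $d_H(u_p',v_p')\ge 2t$ (indeed equality). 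Then the two cases of the proof of Lemma~\ref{lambda4} collapse to: if $d_H(u,v)\ge 2t+1$ we are done directly, and if $d_H(u,v)\le 2t$ then $f(v)\in B_f^1(u,2t)$ forces $\tau_f^1(u)\ne\tau_f^1(v)$, whence $d_H(\mathcal{E}(u),\mathcal{E}(v))=d_H(u,v)+d_H((u_p)^t,(v_p)^t)\ge 0+2t$ is not quite enough—so one uses instead that $d_H(u,v)\ge 1$ (as $f(u)\ne f(v)$ implies $u\ne v$), giving $d_H(\mathcal{E}(u),\mathcal{E}(v))\ge 1+2t=2t+1$, as required for an $(f,t)$-FCC.

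There is essentially no obstacle here: the corollary is a pure specialization, and the only thing worth checking is the off-by-one bookkeeping in the case $d_H(u,v)\le 2t$, where one must remember that $f(u)\ne f(v)$ already guarantees $d_H(u,v)\ge 1$ so that the redundancy contribution of $2t$ from the distinct colours suffices. This recovers exactly \cite[Lemma~2]{Rajput2025}.
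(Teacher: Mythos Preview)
Your proposal is correct and follows exactly the paper's approach: the corollary is stated immediately after Lemma~\ref{lambda4} with no separate proof, so the intended argument is precisely the specialization $b=1$ that you carry out. Your added remark that $f(u)\ne f(v)$ forces $d_H(u,v)\ge 1$ in Case~2 makes explicit the step the paper leaves implicit when it writes ``$\ge 2t+1$ (by Eq.~\eqref{eq2t-b+1})''.
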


Next, we extend Lemma \ref{lambda4} for an arbitrary $\lambda$. 

\begin{theorem}\label{irreb}
    Let $t>0$ be an integer and $f:\mathbb{F}_2^k\to \text{Im}(f)$ be a locally $(\lambda,2t,b)$-function. Then for any $(f,t)$-FCBSC, an upper bound on redundancy is given by 
    $$r_f^b(k,t)\leq N_b(\lambda,2t),$$
    where $N_b(\lambda,2t)$ is the minimum possible length of a binary $b$-symbol error-correcting code with $\lambda$ codewords and minimum $b$-symbol distance $2t$.
\end{theorem}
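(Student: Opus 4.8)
The plan is to mimic the construction used in Lemma~\ref{lambda4}, but to replace the hand-picked length-$3$ code $\{000,110,101,011\}$ with an optimal binary $b$-symbol error-correcting code witnessing $N_b(\lambda,2t)$. First I would invoke Lemma~\ref{tau} to obtain a coloring map $\tau_f^b:\mathbb{F}_2^k\to[\lambda]$ with the property that $\tau_f^b(u)\neq\tau_f^b(v)$ whenever $u\in B_f^b(v,2t)$ or $v\in B_f^b(u,2t)$. Next, let $\mathcal{C}=\{c_1,\dots,c_\lambda\}\subseteq\mathbb{F}_2^{N_b(\lambda,2t)}$ be a code attaining the minimum length $N_b(\lambda,2t)$ with $\lambda$ codewords and minimum $b$-symbol distance $d_b(\mathcal{C})\geq 2t$; assign to each $u\in\mathbb{F}_2^k$ the redundancy block $u_p:=c_{\tau_f^b(u)}$, and define the encoder $\mathcal{E}(u)=(u,u_p)$, which has redundancy exactly $N_b(\lambda,2t)$.

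It then remains to verify that $\mathcal{E}$ defines an $(f,t)$-FCBSC, i.e.\ $d_b(\mathcal{E}(u),\mathcal{E}(v))\geq 2t+1$ whenever $f(u)\neq f(v)$. I would split into the same two cases as in Lemma~\ref{lambda4}. If $d_b(u,v)\geq 2t+1$, then appending the redundancy blocks cannot decrease the $b$-symbol distance on the message part, so $d_b(\mathcal{E}(u),\mathcal{E}(v))\geq 2t+1$ (this monotonicity should be justified by writing the $b$-symbol read vector of the concatenation and noting the message-coordinate contributions are preserved). If instead $d_b(u,v)\leq 2t$, then $f(v)\in B_f^b(u,2t)$, so $\tau_f^b(u)\neq\tau_f^b(v)$, hence $u_p=c_{\tau_f^b(u)}$ and $v_p=c_{\tau_f^b(v)}$ are distinct codewords of $\mathcal{C}$, giving $d_b(u_p,v_p)\geq d_b(\mathcal{C})\geq 2t$. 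One must then combine the $b$-symbol distance contributed by the redundancy block with the (possibly zero) contribution of the message block to conclude a total $b$-symbol distance of at least $2t+1$; here a subtlety is that $b$-symbol distance is not simply additive across a concatenation, so I would either use Lemma~\ref{tuplequal}-type boundary reasoning to control the junction coordinates, or argue via Lemma~\ref{handb} on the Hamming level as in Lemma~\ref{lambda4}.

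The main obstacle I anticipate is precisely this junction bookkeeping in Case~2: unlike the Hamming metric, the $b$-symbol read vector of $(u,u_p)$ includes $b-1$ ``wrap-around'' windows straddling the boundary between the message part and the redundancy part (and cyclically back to the start), so $d_b(\mathcal{E}(u),\mathcal{E}(v))$ need not equal $d_b(u,v)+d_b(u_p,v_p)$. The cleanest fix is likely the one already used in Lemma~\ref{lambda4}: work at the Hamming level first, establish $d_H(\mathcal{E}(u),\mathcal{E}(v))\geq d_H(u_p,v_p)$ and then lift to $b$-symbol distance via $d_b\geq d_H+(b-1)$ from Lemma~\ref{handb}, taking care to also handle the degenerate regime $d_H(\mathcal{E}(u),\mathcal{E}(v))>k+N_b(\lambda,2t)-(b-1)$ where $d_b$ jumps to the full length (which is $\geq 2t+1$ since $k\geq b$). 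Alternatively, one can take a slightly longer redundancy block by padding $\mathcal{C}$ so that the first $b-1$ coordinates agree across all codewords and across the junction, invoking Lemma~\ref{tuplequal} to make the $b$-symbol weights genuinely additive; but since the statement only claims the bound $r_f^b(k,t)\le N_b(\lambda,2t)$, the Hamming-level argument should suffice without any padding, provided one is careful that $d_b(\mathcal{C})\geq 2t$ already forces $d_H(u_p,v_p)\geq 2t-(b-1)$ via Lemma~\ref{handb}, which then combines with the message contribution to clear $2t+1$.
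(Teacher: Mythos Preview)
Your overall architecture---invoke Lemma~\ref{tau}, pick an optimal $b$-symbol code $\mathcal{C}$ of length $N_b(\lambda,2t)$, encode $\mathcal{E}(u)=(u,c_{\tau_f^b(u)})$, and split into the two cases $d_b(u,v)\gtrless 2t$---is exactly what the paper does. The gap is in your Case~2 endgame. You propose to pass to the Hamming level and then lift back, relying on the implication ``$d_b(u_p,v_p)\ge 2t$ forces $d_H(u_p,v_p)\ge 2t-(b-1)$ via Lemma~\ref{handb}.'' That implication is false: Lemma~\ref{handb} gives $d_H+(b-1)\le d_b\le b\,d_H$, so from $d_b\ge 2t$ you only obtain $d_H\ge \lceil 2t/b\rceil$, not $d_H\ge 2t-(b-1)$. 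For a concrete failure take $b=2$, $t=4$, and $u_p-v_p=(1,0,1,0,1,0,1,0)$: then $d_b=8=2t$ while $d_H=4<7=2t-(b-1)$. With only $d_H(u_p,v_p)\ge\lceil 2t/b\rceil$ your Hamming-then-lift chain yields $d_b(\mathcal{E}(u),\mathcal{E}(v))\ge 1+\lceil 2t/b\rceil+(b-1)$, which falls short of $2t+1$ in general. The reason the Hamming route worked in Lemma~\ref{lambda4} is that there the redundancy code was chosen with a \emph{known Hamming} distance, not merely a $b$-symbol distance.

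The paper closes the junction gap differently: it invokes \cite[Lemma~3.1]{Singh2025}, a concatenation sub-additivity bound of the form
\[
d_b\big((u,u_p),(v,v_p)\big)\ \ge\ d_b(u,v)+d_b(u_p,v_p)-(b-1),
\]
applied directly at the $b$-symbol level. In Case~2 one has $u\neq v$ (since $f(u)\neq f(v)$), hence $d_b(u,v)\ge b$ by Lemma~\ref{handb}, and $d_b(u_p,v_p)\ge 2t$ because $u_p,v_p$ are distinct codewords of $\mathcal{C}$; plugging in gives $d_b(\mathcal{E}(u),\mathcal{E}(v))\ge b+2t-(b-1)=2t+1$. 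So the missing ingredient in your proposal is precisely this $b$-symbol concatenation inequality, which replaces both your Hamming detour and your padding alternative.
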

\begin{proof}
    Suppose $C$ is a binary $b$-symbol error-correcting code of length $N_b(\lambda,2t)$ with $\lambda$ codewords and minimum $b$-symbol distance $2t$. By Lemma \ref{tau}, there exists a map $\tau_f^b:\mathbb{F}_2^k\to [\lambda]$ such that for $u,v\in \mathbb{F}_2^k$, $\tau_f^b(u)\neq \tau_f^b(v)$ whenever $f(u)\neq f(v)$ with $d_b(u,v)\leq 2t$. We fix an ordering of codewords in $C$, say $C_1,C_2,\dots,C_{\lambda}$. Define an encoding map as follows
$\mathcal{E}:\mathbb{F}_2^k\to\mathbb{F}_2^{k+N_b(\lambda,2t)}$ given by $\mathcal{E}(u)=(u,C_{\tau_f^b(u)})$.
This encoding gives an $(f,t)$-FCBSC with redundancy $N_b(\lambda,2t)$. For this, let $u,v\in \mathbb{F}_2^k$ such that $f(u)\neq f(v)$. \\
\textbf{Case 1}: If $d_b(u,v)\geq 2t+1$, then 
$$d_b(\mathcal{E}(u),\mathcal{E}(v))\geq d_b(u,v)\geq 2t+1.$$\\
\textbf{Case 2}: If $d_b(u,v)\leq 2t$, then $\tau_f^b(u)\neq \tau_f^b(v)$. Thus $C_{\tau_f^b(u)}\neq C_{\tau_f^b(v)}$. By Lemma \ref{lemma3.1frm12}, 
\begin{align*}
    \begin{split}
d_b(\mathcal{E}(u),\mathcal{E}(v))\geq& d_b(u,v)+d_b(C_{\tau_f^b(u)},C_{\tau_f^b(v)})-(b-1)\\
&=b+2t-(b-1)=2t+1.  
    \end{split}
\end{align*}
This completes the proof.
\end{proof}

From Theorem \ref{irreb} and Lemma \ref{nbnh}, the following upper bound on the optimal redundancy of FCBSCs for locally $(\lambda,2t,b)$-functions can be concluded.
\begin{theorem}\label{genlambda}
    Let $t>0$, $\lambda > 2^{b-1}$ be a positive integer, and $f:\mathbb{F}_2^k\to \text{Im}(f)$ be a locally $(\lambda,2t,b)$-function. Then, for any $(f,t)$-FCBSC, an upper bound on the optimal redundancy is given by 
    $$r_f^b(k,t)\leq N_H(\lambda,2t-b+1),$$
    where $N_H(\lambda, 2t- b+1)$ is the minimum possible length of a binary error-correcting code with $\lambda$ codewords and minimum Hamming distance $2t-b+1$.
\end{theorem}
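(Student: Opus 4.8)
The plan is to combine the two results the theorem explicitly cites, namely Theorem~\ref{irreb} and Lemma~\ref{nbnh}, in the obvious way. First I would recall the conclusion of Theorem~\ref{irreb}: for a locally $(\lambda,2t,b)$-function $f$, any $(f,t)$-FCBSC can be built with redundancy at most $N_b(\lambda,2t)$, the minimum length of a binary $b$-symbol error-correcting code with $\lambda$ codewords and minimum $b$-symbol distance $2t$. This handles all the work of actually constructing an encoding; nothing about that construction needs to be reproduced here.

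Next I would invoke Lemma~\ref{nbnh}, which states that whenever $\lambda > 2^{b-1}$ and $2t \geq b$, one has $N_b(\lambda,2t) \leq N_H(\lambda,2t-b+1)$. The hypothesis $\lambda > 2^{b-1}$ is exactly the standing assumption of the theorem, so it applies directly; I would note in passing that $2t \geq b$ is implied (for $b \geq 2$ the condition $\lambda > 2^{b-1} \geq 2$ forces $|Im(f)| \geq 2$ and the $b$-symbol distance framework already presupposes $2t \geq b$ in the relevant definitions, e.g. the $\boldsymbol{B_f^{(1)}}$ matrix; for $b=1$ it is trivial since $t \geq 1$). Chaining the two inequalities gives
\begin{equation*}
r_f^b(k,t) \leq N_b(\lambda,2t) \leq N_H(\lambda,2t-b+1),
\end{equation*}
which is precisely the claimed bound.

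There is essentially no obstacle: the theorem is a formal corollary of the two preceding results, and the proof is a two-line transitivity argument. The only point requiring a moment of care is making sure the side condition $2t \geq b$ needed by Lemma~\ref{nbnh} is either assumed or derivable in the setting of the theorem; if the paper wants to be fully rigorous it should either add $2t \geq b$ to the hypotheses or remark why it is automatic. Beyond that bookkeeping, the proof is complete as soon as the chain of inequalities is written down.
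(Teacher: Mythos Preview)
Your proposal is correct and matches the paper's approach exactly: the paper does not give a separate proof but simply states that the bound follows by combining Theorem~\ref{irreb} with Lemma~\ref{nbnh}, which is precisely the two-line transitivity argument you wrote out. Your observation that the side condition $2t \geq b$ from Lemma~\ref{nbnh} is silently assumed is a fair point that the paper itself does not address.
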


\begin{theorem}\label{2^b_2t}
    Let $t>0$ and $b\geq 1$ be integers with $b$ dividing $t$ and $N_b(\lambda, 2t)$ be the minimum possible length of a binary $b$-symbol error-correcting code with $\lambda$ codewords and minimum $b$-symbol distance $2t$. Then 
    $$N_b(2^b,2t) = 2t.$$
\end{theorem}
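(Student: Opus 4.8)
The plan is to establish the equality $N_b(2^b, 2t) = 2t$ by proving the two inequalities separately. For the lower bound $N_b(2^b, 2t) \geq 2t$, I would invoke the Plotkin-type bound for the $b$-symbol distance (Lemma~\ref{plotb}). With $M = 2^b$ codewords and $r = 1 - 2^{-b}$, one checks the hypothesis $rn < d$; if a code of length $n < 2t$ with $M = 2^b$ and minimum $b$-symbol distance $d = 2t$ existed, then $rn = (1 - 2^{-b})n < (1-2^{-b})(2t)$, and the Plotkin bound would force $A_b(n, 2t, 2) \leq \lfloor \tfrac{2t}{2t - rn}\rfloor$; I would verify that this quantity is strictly less than $2^b$ whenever $n < 2t$ (using $b \mid t$ so that the arithmetic is clean), contradicting the existence of $2^b$ codewords. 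Alternatively, one can use the relation $d_b \leq b \cdot d_H$ from Lemma~\ref{handb}: a minimum $b$-symbol distance of $2t$ forces the underlying code to have minimum Hamming distance at least $2t/b$, and the classical Plotkin/Singleton considerations for $2^b$ codewords give a length bound; I expect the $b$-symbol Plotkin bound to be the cleaner route.

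For the upper bound $N_b(2^b, 2t) \leq 2t$, I would construct an explicit binary code of length $2t$ with $2^b$ codewords and minimum $b$-symbol distance exactly $2t$. Since $b \mid t$, write $t = b m$. The natural candidate is to take the $2^b$ codewords indexed by the $2^b$ binary strings $a \in \mathbb{F}_2^b$, where the codeword $c_a$ is built from length-$(2t/2^b)$? — more precisely, I would partition the $2t$ coordinates into $2^b$ equal blocks and let $c_a$ be the indicator-type vector determined by $a$ in a way that any two distinct $c_a, c_{a'}$ differ in a large, controlled set of positions. A concrete and robust choice: let each $c_a$ consist of $t$ copies of a length-$2$ pattern read off from a fixed injection $\mathbb{F}_2^b \hookrightarrow \mathbb{F}_2^{2}$-style gadget — but the simplest is to mimic the equidistant construction: take the $2^b \times 2t$ matrix whose rows are the $2^b$ codewords of the (shortened/repeated) first-order Reed–Muller or simplex-type code, scaled by repetition so every pair has Hamming distance exactly $t$ and, crucially, so that in each pair the first $b$ coordinates agree or the disagreements are spread out, making $d_b = 2 d_H = 2t$ via Lemma~\ref{handb}(2) (the equality case $d_b = b \cdot d_H$ requires the differing positions to be ``isolated,'' i.e., no two consecutive). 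Arranging the repetition pattern so that differing coordinates are non-adjacent is where the condition $b \mid t$ is used.

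The main obstacle I anticipate is the upper-bound construction: producing a length-$2t$ binary code with $2^b$ codewords whose pairwise $b$-symbol distances are all exactly $2t$ (not merely $\geq 2t$), since the $b$-symbol distance is sensitive to the positions of disagreements, not just their number. The equidistant Hamming-metric codes on $2^b$ points have minimum distance around $t$ at length $\sim 2t$, and to upgrade $d_H = t$ to $d_b = 2t$ one needs every pair's symbol-differences to be pairwise non-adjacent and non-wrap-around-adjacent; achieving this simultaneously for all $\binom{2^b}{2}$ pairs is the delicate combinatorial step, and I would handle it by an interleaving/repetition layout dictated by $b \mid t$, checking the worst-case pair explicitly. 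Once the construction is in hand, $d_b = 2t$ for every pair gives $N_b(2^b, 2t) \leq 2t$, and combined with the Plotkin lower bound this yields the claimed equality.
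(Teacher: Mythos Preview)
Your lower-bound argument via the Plotkin-type bound (Lemma~\ref{plotb}) is exactly the paper's route: if a length-$n$ code with $n<2t$ existed, then $rn<2t$ and the bound yields $A_b(n,2t,2)\le \lfloor 2t/(2t-rn)\rfloor<2^b$, a contradiction. (The hypothesis $b\mid t$ is not needed for this direction.)

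The gap is in your upper-bound construction. Aiming for a Hamming-equidistant code with common Hamming distance $t$ and then invoking ``$d_b=2d_H=2t$'' does not work for $b\neq 2$: the relation $d_b=2d_H$ is specific to the symbol-pair case, whereas for general $b$ isolated differences give $d_b=b\cdot d_H$, not $2d_H$. And even the existence of a binary equidistant code with $2^b$ words, length $2t$, and common Hamming distance $t$ would require $2^{b-1}\mid t$ (via a Hadamard-type construction), which is strictly stronger than the stated hypothesis $b\mid t$. So both the target relation and the feeder code are off.

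The paper's construction is much simpler and sidesteps all of this. List the $2^b$ strings $s_i\in\mathbb{F}_2^b$ and set $c_i=(s_i)^{2t/b}\in\mathbb{F}_2^{2t}$; this is precisely where $b\mid t$ is used. For any two distinct length-$b$ strings one has $d_b(s_i,s_j)=b$ (every cyclic $b$-window sees the entire string), and Lemma~\ref{tuplequal} then gives $d_b(c_i,c_j)=\tfrac{2t}{b}\cdot b=2t$. Note that the pairwise \emph{Hamming} distances of this code are not constant---only the $b$-symbol distances are---so the detour through Hamming equidistance is a red herring. The ingredient you are missing is Lemma~\ref{tuplequal}, which computes $b$-symbol weights of concatenations directly and makes the repetition construction immediate.
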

\begin{proof}
   Let $S=\{s_i\in \mathbb{F}_2^b|\ s_i \text{ is binary representation of } i \ \forall\  0\leq i\leq 2^b-1 \text{ of length } b\}.$ Consider binary $b$-symbol code $ \mathcal{C} = \{c_1, c_2, \ldots, c_{2^b}\}$ of size $2^b$ and length $2t$, where $c_i = (s_i)^{\frac{2t}{b}}\in \mathbb{F}_2^{2t}$. We claim that the minimum $b$-symbol distance of $C$ is at least $2t$. For this, note that $d_b(s_i,s_j)\geq b$ for $i\neq j$.  Using similar arguments as in Lemma \ref{tuplequal} (see Eq. \ref{mcopies}), for $i\neq j$, we have 
\begin{align}\label{ui2t}
    \notag d_b(c_i,c_j)=d_b(\underbrace{(s_i,s_i,\dots,s_i)}_{\frac{2t}{b} \text{ copies }},\underbrace{(s_j,s_j,\dots,s_j)}_{\frac{2t}{b} \text{ copies }})=w_b(\underbrace{s_i-s_j,s_i-s_j,\dots,s_i-s_j)}_{\frac{2t}{b} \text{ copies}}\\
   =\frac{2t}{b}w_b(s_i-s_j)=\frac{2t}{b}d_b(s_i,s_j)\geq  2t. 
\end{align}
Therefore, we get  $$N_b(2^b,2t) \leq 2t.$$
As $N_b(2^b,2t)$ is the length of a code with minimum $b$-symbol distance $2t$, thus $N_b(2^b,2t)\geq 2t$. This completes the proof.
\end{proof}
From Theorem \ref{irreb}  and Theorem \ref{2^b_2t}, we derive an upper bound on the optimal redundancy of $(f, t)$-FCBSCs for locally $(2^b, 2t, b)$-functions, as stated in the following corollary.
\begin{corollary}
Let $t>0$ and $b\geq 1$ be integers with $b \mid t$. The optimal redundancy of an $(f,t)$-FCBSC for locally $(\lambda=2^b,2t,b)$-functions  is upper bounded by 
$r_f^b(k,t)\leq 2t.$
\end{corollary}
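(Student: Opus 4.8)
The final statement to prove is the corollary: for $t>0$, $b\geq 1$ with $b\mid t$, any locally $(\lambda=2^b,2t,b)$-function $f$ admits an $(f,t)$-FCBSC with $r_f^b(k,t)\leq 2t$.

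\medskip

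The plan is to simply chain together the two results cited in the sentence preceding the corollary. First I would invoke Theorem~\ref{irreb}, which states that for a locally $(\lambda,2t,b)$-function $f$, one has $r_f^b(k,t)\leq N_b(\lambda,2t)$, the minimum length of a binary $b$-symbol error-correcting code with $\lambda$ codewords and minimum $b$-symbol distance $2t$. Specializing to $\lambda=2^b$ gives $r_f^b(k,t)\leq N_b(2^b,2t)$. Then I would apply Theorem~\ref{2^b_2t}, whose hypothesis ``$b$ divides $t$'' matches exactly the hypothesis of the corollary, to conclude $N_b(2^b,2t)=2t$. Combining the two inequalities yields $r_f^b(k,t)\leq 2t$, which is the claim.

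\medskip

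The only thing to double-check is that the hypotheses align: Theorem~\ref{irreb} requires $f$ to be a locally $(\lambda,2t,b)$-function and imposes no divisibility condition, while Theorem~\ref{2^b_2t} requires $b\mid t$ and a value $\lambda=2^b$ — both are supplied by the corollary's hypotheses, so there is no gap. There is no real obstacle here; this is a bookkeeping corollary that packages the constructive upper bound of Theorem~\ref{irreb} with the exact evaluation $N_b(2^b,2t)=2t$. If anything, the ``main obstacle'' was already dispatched in the proof of Theorem~\ref{2^b_2t} (the lower bound via the Plotkin-type bound of Lemma~\ref{plotb} together with the explicit repetition-of-binary-representations code realizing the upper bound), so here one merely cites it.

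\medskip

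Concretely, the proof I would write is: ``Since $f$ is a locally $(2^b,2t,b)$-function, Theorem~\ref{irreb} gives $r_f^b(k,t)\leq N_b(2^b,2t)$. Because $b\mid t$, Theorem~\ref{2^b_2t} gives $N_b(2^b,2t)=2t$. Hence $r_f^b(k,t)\leq 2t$.'' One could optionally remark that the encoding achieving this bound is the one constructed in the proof of Theorem~\ref{irreb}, namely $\mathcal{E}(u)=(u,C_{\tau_f^b(u)})$ with $C$ the code $\{(s_i)^{2t/b}\}$ from the proof of Theorem~\ref{2^b_2t}, but that is not needed for the statement as phrased.
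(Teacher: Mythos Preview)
Your proposal is correct and matches the paper's own approach exactly: the corollary is stated immediately after the sentence ``From Theorem~\ref{irreb} and Theorem~\ref{2^b_2t}, we derive an upper bound\ldots'', and the intended proof is precisely the two-line chaining $r_f^b(k,t)\leq N_b(2^b,2t)=2t$ that you wrote. There is no gap and nothing to add.
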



By integrating Corollary \ref{cor3.3frm12} with the preceding theorem, we have the following corollary.

\begin{corollary}\label{opt_red_bdd}
    Let $b \geq 1$ and $t > b - 1$ be integers with $b \mid t$. The optimal redundancy of an $(f,t)$-FCBSC for any locally $(\lambda=2^b,2t,b)$-function $f$ satisfies
    $$2(t-b+1)\leq r_f^b(k,t)\leq 2t.$$
\end{corollary}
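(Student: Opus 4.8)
The final statement is the corollary giving the two-sided bound $2(t-b+1)\le r_f^b(k,t)\le 2t$ for locally $(2^b,2t,b)$-functions when $b\mid t$ and $t>b-1$. The plan is to obtain this by simply combining two results already available in the excerpt: the upper bound $r_f^b(k,t)\le 2t$ comes from the immediately preceding corollary (itself a consequence of Theorem~\ref{irreb} and Theorem~\ref{2^b_2t}, since a locally $(2^b,2t,b)$-function admits an $(f,t)$-FCBSC of redundancy $N_b(2^b,2t)=2t$), and the lower bound $2(t-b+1)\le r_f^b(k,t)$ comes from \cite[Corollary 3.3]{Singh2025}, which applies to every function $f$ with $|\mathrm{Im}(f)|\ge 2$ and $t>b-1$.

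The key steps, in order, are: first, note that the hypotheses $b\mid t$ and $t>b-1$ are exactly what is needed to invoke both ingredients simultaneously — $b\mid t$ for the upper bound (via Theorem~\ref{2^b_2t}) and $t>b-1$ for the lower bound. Second, observe that a locally $(2^b,2t,b)$-function with $\lambda=2^b$ has $|\mathrm{Im}(f)|\ge 2$ in the nontrivial regime $\lambda\le|\mathrm{Im}(f)|$, so \cite[Corollary 3.3]{Singh2025} applies and yields $r_f^b(k,t)\ge 2(t-b+1)$. Third, quote the preceding corollary for $r_f^b(k,t)\le 2t$. Concatenating these two inequalities gives the claim.

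The main (and only) obstacle is a bookkeeping one: making sure the side conditions line up — in particular that the lower-bound result from \cite{Singh2025} genuinely requires only $t>b-1$ and $|\mathrm{Im}(f)|\ge 2$ (no divisibility), while the upper bound needs $b\mid t$, and that there is no hidden clash (e.g. the degenerate case $|\mathrm{Im}(f)|<2$, which is excluded since otherwise there is nothing to correct). There is no real mathematical difficulty beyond assembling the pieces.

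\begin{proof}
Since $b \mid t$ and $t > b-1$, both of the following apply. First, by the preceding corollary (obtained from Theorem~\ref{irreb} together with Theorem~\ref{2^b_2t}, using $N_b(2^b,2t) = 2t$), the optimal redundancy of an $(f,t)$-FCBSC for a locally $(2^b, 2t, b)$-function satisfies
$$r_f^b(k,t) \leq 2t.$$
Second, a locally $(2^b,2t,b)$-function $f$ with $\lambda = 2^b \leq |\mathrm{Im}(f)|$ has $|\mathrm{Im}(f)| \geq 2$, so by \cite[Corollary 3.3]{Singh2025}, which holds for any such $f$ whenever $t > b-1$, we have
$$r_f^b(k,t) \geq 2(t-b+1).$$
Combining the two inequalities yields $2(t-b+1) \leq r_f^b(k,t) \leq 2t$.
\end{proof}
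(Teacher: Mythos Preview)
Your proposal is correct and follows exactly the paper's own approach: the paper simply states that the corollary is obtained by integrating the lower bound from \cite[Corollary 3.3]{Singh2025} (valid for $|\mathrm{Im}(f)|\ge 2$ and $t>b-1$) with the preceding upper bound $r_f^b(k,t)\le 2t$. Your write-up just makes explicit the bookkeeping that the paper leaves implicit.
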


Now we present an explicit construction of an encoding function corresponding to the $b$-symbol weight distribution function $\Delta_T^b$. We show that this encoding function defines a $(\Delta_T^b,t)$-FCBSC with redundancy $2t-b+1$.

\begin{theorem}\label{explicitcons}
    Let $4t\geq T> 2t$. Then there exists a $(\Delta_T^b, t)$-FCBSC for $b$-symbol read channel with redundancy $r_{\Delta_T^b}^b(k,t)=2t-b+1$. 
\end{theorem}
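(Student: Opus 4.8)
The plan is to show that when $4t \geq T > 2t$, the $b$-symbol weight distribution function $\Delta_T^b(u) = \lfloor w_b(u)/T \rfloor$ has very restricted image behavior on vectors that are close in the $b$-symbol metric, and to exploit this to build an explicit low-redundancy encoding. First I would analyze the structure of $\Delta_T^b$ under the hypothesis $4t \geq T > 2t$: since $w_b(u)$ ranges in $[0,k]$ and the threshold $T$ exceeds $2t$, two vectors $u,v$ with $d_b(u,v) \leq 2t$ can have $b$-symbol weights differing by at most $2t < T$, so $\lfloor w_b(u)/T\rfloor$ and $\lfloor w_b(v)/T\rfloor$ differ by at most $1$. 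This means that whenever $\Delta_T^b(u) \neq \Delta_T^b(v)$ and $d_b(u,v) \leq 2t$, the two weights straddle a multiple of $T$, i.e. one of $w_b(u), w_b(v)$ lies in $[mT - 2t, mT - 1]$ and the other in $[mT, mT + 2t]$ for some $m \geq 1$; moreover the constraint $T \leq 4t$ controls how tightly these bands can be packed. Effectively $\Delta_T^b$ behaves like a locally $(\lambda, 2t, b)$-function with $\lambda = 2$ restricted to each relevant "transition band," which is exactly the regime where a binary-labeling redundancy construction works.

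Next I would carry out the encoding construction explicitly, mimicking the approach of Lemma~\ref{lambda4} and Theorem~\ref{irreb} but with $\lambda$ effectively equal to $2$. Using Lemma~\ref{tau} (with the locality structure just described), I would obtain a two-coloring-type map that assigns to each $u$ a label in $\{0,1\}$ based on the parity of $\lceil w_b(u)/T \rceil$ (or equivalently on which side of the nearest multiple of $T$ the weight $w_b(u)$ falls), so that any two messages $u,v$ with $f(u) \neq f(v)$ and $d_b(u,v) \leq 2t$ receive different labels. Then I would attach as redundancy the $b$-symbol codeword $(0^t)$ or $(1^t)$ truncated to its first $2t - b + 1$ coordinates, exactly as in the proof of Lemma~\ref{lambda4}: writing $(u_p)^t = (u_p', u_p'')$ with $u_p' \in \mathbb{F}_2^{2t - b + 1}$ and $u_p'' \in \mathbb{F}_2^{b-1}$, and defining $\mathcal{E}(u) = (u, u_p')$. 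The key distance computation is that two distinct labels force $d_H(u_p^t, v_p^t) = 2t$, hence $d_H(u_p', v_p') \geq 2t - (b-1)$, and then by Lemma~\ref{handb} the combined $b$-symbol distance $d_b(\mathcal{E}(u), \mathcal{E}(v)) \geq d_H(\mathcal{E}(u),\mathcal{E}(v)) + (b-1) \geq d_H(u,v) + (2t - b + 1) + (b-1) \geq 2t + 1$, handling the case $d_b(u,v) \leq 2t$; the case $d_b(u,v) \geq 2t+1$ is immediate. This gives the upper bound $r_{\Delta_T^b}^b(k,t) \leq 2t - b + 1$.

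For the matching lower bound $r_{\Delta_T^b}^b(k,t) \geq 2t - b + 1$, I would exhibit three (or two) specific messages realizing a distance-requirement matrix that forces this length, using the generalized Plotkin bound from \cite[Lemma 1]{Lenz2023} together with the irregular $b$-symbol distance machinery and \cite[Corollary 3.3]{Singh2025}, analogously to the lower-bound argument in Theorem~\ref{optimality_condition}. Concretely, I would pick vectors $u_1$ with $w_b(u_1)$ just below $T$ and $u_2$ with $w_b(u_2) = w_b(u_1) + 1 \geq T$ (so $f(u_1) \neq f(u_2)$ but $d_b(u_1,u_2)$ is small), forcing the irregular distance entry $2t - b + 1$ between their encodings, which by \cite[Corollary 3.3]{Singh2025}-style reasoning yields $r \geq 2t - b + 1$; the condition $4t \geq T$ is what guarantees such a pair actually exists with the weights realizable in $\mathbb{F}_2^k$ and the transition falling in range.

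The main obstacle I anticipate is the lower-bound half: making sure the hypothesis $4t \geq T > 2t$ is used in an essential and tight way, i.e. verifying that messages with the required $b$-symbol weights and small mutual $b$-symbol distance genuinely exist (this needs Lemma~\ref{tuplequal}-type constructions to realize prescribed $b$-symbol weights via repetition/concatenation patterns), and that no encoding of length $2t - b$ can separate them. Getting the constant exactly $2t - b + 1$ rather than something weaker will require care in how the truncation interacts with Lemma~\ref{handb}, paralleling Eq.~\eqref{eq2t-b+1}; I expect the upper bound construction to be routine once the locality claim is established, so the real work is the locality analysis under $4t \geq T > 2t$ and the tightness argument.
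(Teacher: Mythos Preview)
Your upper-bound construction is essentially the paper's own proof. The paper labels each $u$ by the parity of $\Delta_T^b(u)=\lfloor w_b(u)/T\rfloor$, sets the redundancy to be the single bit $u_p\in\{0,1\}$ repeated $2t-b+1$ times (no truncation is needed), and then splits into the two cases ``same parity'' (forcing $|\Delta_T^b(u)-\Delta_T^b(v)|\geq 2$, hence $d_b(u,v)\geq w_b(u)-w_b(v)\geq T>2t$) and ``different parity'' (where $d_H(u_p',v_p')=2t-b+1$ combines with $d_H(u,v)\geq 1$ via Lemma~\ref{handb}). Your locality analysis (``differ by at most $1$ when $d_b\leq 2t$'') is just the contrapositive of the paper's Case~1, and your parity coloring is the same map; invoking Lemma~\ref{tau} is unnecessary since the coloring is explicit. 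One small slip: you write $(u_p)^t$ with $u_p$ a single bit and then split it as $(u_p',u_p'')$ with $|u_p'|=2t-b+1$ and $|u_p''|=b-1$, which totals $2t$, not $t$. The paper sidesteps this by setting $u_p'=(u_p)^{2t-b+1}$ directly, and you should too; then $d_H(u_p',v_p')=2t-b+1$ exactly, with no loss from truncation.

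The lower-bound half of your plan is superfluous: the paper does \emph{not} prove $r_{\Delta_T^b}^b(k,t)\geq 2t-b+1$. Despite the equality sign in the statement, the proof given is purely an existence argument, and in fact the hypothesis $4t\geq T$ is never invoked anywhere in it (only $T>2t$ is used, in Case~1). So your suspicion that $4t\geq T$ is there for tightness may well be the authors' intent, but it is not part of what is actually proved; you can drop that entire paragraph and the proof is complete.
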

\begin{proof}
Define an encoding $\mathcal{E}:\mathbb{F}_2^k\to \mathbb{F}_2^{k+2t-b+1}$ as follows
$$\mathcal{E}(u)=(u,u_p'),$$ where 
$u_p'=(u_p)^{2t-b+1}\in \mathbb{F}_2^{2t-b+1}$ and 
$$u_p=\begin{cases}
    0 \hspace{0.5 cm} \text{if } \Delta_T^b(u)\equiv 0\pmod{2}\\
    1 \hspace{0.5 cm} \text{if } \Delta_T^b(u)\equiv 1\pmod{2}.\\
\end{cases}$$
This encoding gives a $(\Delta_T^b,t)$-FCBSC. For this, let $u,v\in \mathbb{F}_2^k$ such that $\Delta_T^b(u)\neq \Delta_T^b(v)$. Then, we have the following.\\
\textbf{Case 1}: If $\Delta_T^b(u)$ and $\Delta_T^b(v)$ are both odd or both even, then (WLOG assume that $\Delta_T^b(u)>\Delta_T^b(v)$)
$$\Delta_T^b(u)-\Delta_T^b(v)\geq 2.$$
Also, 
$$\frac{w_b(u)-w_b(v)}{T}\geq \left \lfloor\frac{w_b(u)}{T}\right \rfloor-\left \lfloor \frac{w_b(v)}{T}\right \rfloor-1=\Delta_T^b(u)-\Delta_T^b(v)-1\geq 1.$$
Thus, $d_b(u,v)\geq w_b(u)-w_b(v)\geq T\geq 2t+1$. Consequently, $$d_b(\mathcal{E}(u),\mathcal{E}(v))\geq d_b(u,v)\geq 2t+1.$$
\textbf{Case 2}: If $\Delta_T^b(u)$ is odd and $\Delta_T^b(v)$ is even (or $\Delta_T^b(u)$ is even and $\Delta_T^b(v)$ is odd), then $d_H(u_p',v_p')=2t-b+1$.
Now, if $d_H(\mathcal{E}(u),\mathcal{E}(v))\leq (k+2t-b+1)-(b-1)$, then by Lemma \ref{handb}, we have
\begin{align*}
    \begin{split}
        d_b(\mathcal{E}(u),\mathcal{E}(v))&\geq d_H(\mathcal{E}(u),\mathcal{E}(v))+(b-1)\\
        &=d_H(u,v)+ d_H(u_p',v_p')+(b-1)\\
        &\geq 1+(2t-b+1)+(b-1)= 2t+1.
    \end{split}
\end{align*}
Otherwise, when $d_H(\mathcal{E}(u),\mathcal{E}(v))> (k+2t-b+1)-(b-1)$. Then
\begin{align*}
    \begin{split}
        d_b(\mathcal{E}(u),\mathcal{E}(v))=k+2t-b+1\geq 2t+1 \hspace{0.2 cm}\text{ (since } k\geq b).
    \end{split}
\end{align*}
This completes the proof.
\end{proof}

\section{Conclusion}\label{sectionconclusion}
    In this work, we considered function-correcting codes designed for locally bounded functions over $b$-symbol read channels, called locally $(\lambda,\rho,b)$-functions. We investigated the values of $\lambda $ and $\rho$ for which a given function is locally bounded by $\lambda$ in the $b$-symbol metric. Further, we explicitly explored the smallest values of $\lambda$ for $b$-symbol weight and weight distribution functions to be locally bounded functions. We associated the optimal redundancy of function-correcting $b$-symbol codes (FCBSCs) with the smallest achievable length of error-correcting codes in $b$-symbol metric. Using this, we established several upper bounds on the optimal redundancy of FCBSCs for locally $(\lambda,\rho,b)$-functions. We showed that, under certain conditions, a locally ($3,2t,b=1$)-function achieves the optimal redundancy of $3t$. Furthermore, we explicitly studied the optimal redundancy of FCBSCs for the weight functions and the weight distribution functions in the Hamming and the $b$-symbol metric. In addition, we provided explicit construction of FCBSCs for $b$-symbol weight distribution function with a redundancy $2t-b+1$.
    
    The study of function-correcting codes for different classes of functions under various metrics is an interesting and active area of research. Therefore, extending the study of function-correcting codes to other function classes, such as linear functions and diverse read channels, remains an important avenue for future investigation. In particular, the study of function-correcting codes for the class of linear functions and locally bounded functions under the homogeneous metric is an interesting problem.

\bibliographystyle{abbrv}
	\bibliography{ref}

\end{document}